\newtheorem{theorem}{Theorem}
\newtheorem{lemma}{Lemma}
\newtheorem{proposition}{Proposition}
\newtheorem{remark}{Remark}
\newtheorem{definition}{Definition}
\newtheorem{assumption}{Assumption}
\numberwithin{theorem}{section}
\numberwithin{lemma}{section}
\numberwithin{corollary}{section}
\numberwithin{proposition}{section}
\numberwithin{remark}{section}
\newcommand{\bs}[1]{\boldsymbol{#1}}
\newcommand{\supp}{{\rm supp}}
\newcommand{\dist}{{\rm dist}}
\title{Design for implementation of discrete-time  quantum walk with circulant matrix on graph by optical polarizing elements}
\author{
Yusuke Mizutani$^1$, Etsuo Segawa$^{2}$, \\
Yusuke Higuchi$^{3}$, Leo Matsuoka$^{4}$, Tomoyuki Horikiri$^{1}$  \\
{\small $^1$ Faculty of Engineering Division of Intelligent Systems Engineering, Yokohama National University,}\\
{\small Hodogaya, Yokohama 240-8501, Japan.} \\
{\small $^2$Graduate School of Environment and Information Sciences, Yokohama National University,}\\
{\small Hodogaya, Yokohama 240-8501, Japan.
}\\
{\small $^3$Department of Mathematics,
Gakushuin University, } \\
{\small Tokyo 171-8588, Japan.} \\
{\small $^4$Faculty of Engineering, Hiroshima Institute of Technology, } \\
{\small Hiroshima, 731-5193, Japan.} \\
}
\date{}
\begin{document}

\maketitle

\par\noindent
{\bf Abstract}. 
In this paper, we introduce a quantum walk whose local scattering at each vertex is denoted by a unitary circulant matrix; namely the circulant quantum walk. We also introduce another quantum walk induced by the circulant quantum walk; namely the optical quantum walk, whose underlying graph is a $2$-regular directed graph and obtained by blowing up the original graph in some way. We propose a design of an optical circuit which implements the stationary state of the optical quantum walk. We show that if the induced optical quantum walk does not have $+1$ eigenvalue, then the stationary state of the optical quantum walk gives that of the original circulant quantum walk. From this result, we give a useful condition for the setting of the circulant quantum walks which can be implemented by this optical circuit. 
\footnote[0]{
{\it Key words and phrases.} 
Quantum walk,
Stationary state,
Polarization,
Optical circuit,
Circulant matrix

}

\section{Introduction}
Random walks on finite graphs play key roles to analysis on the electrical network~(e.g., \cite{DS}) and cut off phenomena~(e.g., \cite{Diaconis,LevinPeres}). 
Quantum walks (QWs) are known as quantum counterpart of such random walks~\cite{AmbainisEtAl,FH}. 
Although the application of quantum walks to quantum search algorithm is one of the more remarkable application~\cite{Ambainis2003, Childs, Portugal},  
we also anticipate that such a quantum version of the application will be developed. 
It is well known that irreducible random walks on finite graph converges to the stationary state in the long time limit. 
This is very fundamental to such applications of random walks.
The convergence to the stationary state of random walks  is supported by the fact that all the absolute values of the eigenvalues except $\pm 1$ are strictly smaller than $1$ because these eigenvalues converge to $0$ by the exponentiation of the time steps.   
On the other hand, the eigenvalues of quantum walks live on the unit circle in the complex plan. Thus every eigenvalue having the overlap to the initial state ``asserts" its existence even in the long time limit in general since the  absolute value of the eigenvalue is unit. 
Indeed, in the quantum search algorithm, 
a high probability at the marked vertices is obtained in an asymptotic periodicity of the time evolution with respect to the sufficiently large number of vertices.  
This derives from the eigenvalues having a large overlap to the initial state.   
Then if we ``miss" the optical timing of the observation, we may have a very low probability of finding the marked vertex~\cite{Brassard}.  

Thus it is natural to consider finding a stationary state of a quantum walk as a fixed point of a dynamical system~\cite{Grover,YLC}. In~\cite{FelHil1,FelHil2,HS}, such a quantum walk model where the dynamics converges to a stationary state is proposed by considering a semi-infinite system. 
In this model, the boundary of the graph and the initial state are set so that the unitary time evolution on the whole space, which includes the outside of the graph, describes that some quantum walkers penetrate as the inflow from the outside of the graph to the boundaries and  some quantum walkers go out from the boundaries to the outside of the graph as the outflow at every time step. 
Then it is mathematically shown the dynamical system based on this quantum walk model converges to a fixed point as a stationary state since the inflow to the graph and outflow to the outside are balanced in the long time limit~\cite{FelHil1,FelHil2,HS}. 
For example, it is shown that the stationary state of the Szegedy walk induced by reversible random walk with a constant inflow can be expressed by using the current of an electrical circuit~\cite{MHS}. 

Then, because discrete-time quantum walk is  implemented on the one- and two-dimensional lattices (see for examples, \cite{ZhaoEtAl,SchreiberEtAl}, \cite{MW} and its references therein), and also continuous-time quantum walk is implemented on the circulant graph~\cite{QLMAZOWM} and so on, we attempt to consider an implementation of this quantum walk on a general graph that has the stationary state as a dynamical system. 
In~\cite{MMOS}, the stationary state of a discrete-time quantum walk model on the one-dimensional lattice gives the stationary Schr{\"o}dinger equation with delta potentials on $\mathbb{R}$~\cite{Albe} 
and a possible approach to the implementation of this quantum walk model using the optical circuit is suggested; the internal graph  corresponds to a finite path graph in the setting of our quantum walk treated here. The more detailed mathematical discussion on \cite{MMOS} from the view point of the spectral and scattering theory can be seen in ~\cite{Morioka}.   
In this paper, based on an idea inspired by \cite{ZhaoEtAl,MMOS} in particular, we propose the design of an optical circuit implementing our quantum walk model on a general graph which converges to a stationary state. 
Moreover we mathematically find a useful setting of this quantum walk model where the stationary states can be implemented by using this optical circuit, although technical difficulties remain in terms of the implementation. 
The quantum walk model of our target for the implementation is called the circulant quantum walk. The quantum coin assigned at each vertex, which describes the manner of scattering at each vertex $u\in V$, is given by a $d(u)$-dimensional circulant matrix. Here $d(u)$ is the degree of the vertex $u$.  
The circulant matrix is diagonalized by the discrete Fourier matrix and related to coding theories ~\cite{Davis}. The circulant matrix assigned at each vertex  coincides with the scattering matrix describing the response of our proposed optical circuit (see Fig.~\ref{fig:a}). 

The main idea of the implementation of the quantum coin is that we embed this local optical circuit into each vertex as the quantum coin and we regard each boundary of the circuit as the ``gateway" leading into one of the adjacent vertices (see Figs.~\ref{fig:GOGBU}, \ref{fig:GprimeO}).
We heuristically show that this resulting large optical circuit can be represented by the stationary state of ``another" quantum walk model-namely, an optical quantum walk.  The underlying graph of the optical quantum walk is a blow-up digraph of the original graph which is $2$-regular. This $2$-regularity is represented by horizontal and vertical polarization of the light  $|H\rangle$ and $|V\rangle$ in the optical circuit in our proposed very ideal design. 
To implement the original quantum walk by this optical circuit, the theoretical problem is to clarify the relation between  stationary states of the circulant quantum walk and optical quantum walk. 
So if such a stationary state of the circulant quantum walk coincides with that of the optical quantum walk, we say that ``the optical quantum walk implements the underlying circulant quantum walk". 
In this paper, we mathematically show a useful sufficient condition for the implementation.  
The sufficient condition provides a concrete setting of the implementing optical circuit. The setting breaks a kind of symmetricity (see Theorem~\ref{cor:graph} and Figs.~\ref{fig:setting1} and \ref{fig:setting2}) with respect to the orientation of the circuit or with respect to the quantum coins. 

The rest of this paper is organized as follows.
In Section~2, the circulant quantum walk on graph $G=(V,A)$ is introduced. The quantum coin assigned at each vertex is a circulant matrix induced by a two-dimensional unitary matrix $H_u$. The circulant quantum walk is determined by the sequence $\{H_u\}_{u\in V}$ and labeling $\{\xi_u\}_{u\in V}$, where $\xi_u: A_u \to \{0,\dots,\deg(u)-1\}$ is a bijection map. Here $A_u$ is the set of all the arcs of $G$ whose terminal vertices are $u\in V$. In Section~3, we introduce the optical quantum walk on the blow-up directed graph induced by the original graph of the circulant graph. Then the optical quantum walk is determined by the same parameters of the circulant quantum walk. Our target is to find a useful condition for the setting of the circulant quantum walk in which the stationary state of the circulant quantum walk can be obtained by referring to that of the optical quantum walk. The motivation for the target derives from the expectation that the optical quantum walk can be implemented by an optical circuit using polarizing elements proposed by Section~4.  
In particular, the design of the optical circuit is proposed for the circulant quantum walk on arbitrary connected graph in Section~4. 
In Section~5, we demonstrate numerically the case for the complete graph with $10$ vertices, $K_{10}$. The first example is the case that the implementation works while the second example is the case that the implementation does not work.  In Section~6, a solution to the concrete labeling way $\{\xi_u\}_{u\in V}$ for the implementation is proposed using a key proposition. 
The key proposition gives a sufficient condition for the implementation: if the induced optical quantum walk does not have eigenvalue $1$, then the implementation works. 
In Section~7, we present the useful conditions for the implementation in the complete graph $K_N$ case.
In Section~8, we give the proof for the key proposition. Finally, provide the summary and discussion of our results.  
The important notations are listed in Table~\ref{table:notation}.

\begin{table}[htb]\label{table:notation}
\caption{Notations in this paper}
\begin{tabular}{r|l}
Symmetric digraph
& a digraph where every arc has the inverse arc\\

$t(a)$, $o(a)$ 
& terminal and origin vertices of arc $a$, respectively \\

$\bar{a}$
& the inverse arc of arc $a$ \\

$G_0=(V_0,A_0)$     
& internal graph (symmetric digraph)  \\

tail
& semi-infinite length path whose root connects to a  vertex of $G_0$ \\

$\tilde{G}_0=(\tilde{V}_0,\tilde{A}_0)$     
& semi-infinite graph obtained by adding the tail to every vertex of $G_0$\\

$\deg(u)$
& degree of $u\in \tilde{V}_0$ in $\tilde{G}_0$ \\

$\partial\tilde{A}_{\pm}$
& the set of arcs of tails \\
& \quad whose origin($+$) and terminus($-$) belong to $G_0$. \\

$A_u$
& the set of arcs of $\tilde{G}_0$ whose terminal vertices are commonly $u\in \tilde{V}$. \\

$\bs{\xi}=(\xi_u)_{u\in \tilde{V}_0}$ 
& labeling of arcs in $\tilde{G}_0$ (Definition~\ref{def:label}) \\

$\tilde{G}_0^{(BU,\xi)}=(\tilde{V}_0^{(BU,\xi)},\tilde{A}_0^{(BU,\xi)})$ 
& blow-up graph of $\tilde{G}_0$ with the labeling $\bs{\xi}$ (Definition~\ref{def:BUG}) \\

island $u$ $(\subset \tilde{G}_0^{(BU,\xi))})$ 
& directed cycle induced by $u\in V_0$ obtained by \\
& \quad the blowing up procedure (1) in Definition~\ref{def:BUG}\\

$A_0^{cycle} \;(\subset \tilde{A}_0^{(BU,\xi)})$
& set of arcs of all the islands \\

$A_{0,u}^{cycle}$
& arc set of the island $u$ \\

$(u;\xi_u(v)) \;(\in \tilde{V}_0^{(BU,\xi)})$ 
& vertex in the island $u$ connected to the island $v$ \\

$G'=(HWP\sqcup PBS,A')$
& the graph for the optical circuit design induced by $\tilde{G}_0^{(BU,\xi)}$ (Sect.~\ref{subsect:optcircuit})\\

&\\

$H_u$ 
& two-dimensional unitary matrix assigned at $u\in V_0$ \\

$\mathrm{Circ}(H_u)$ 
& $\deg(u)$-dimensional circulant matrix defined by (\ref{eq:circulantmat}) \\

$QW(G_0;(H_u)_{u\in V_0};(\xi)_{u\in V_0})$ 
& circulant quantum walk on $\tilde{G}_0$ (Definition~\ref{def:CQW})\\

$\mathrm{Opt}(QW(G_0;(H_u)_{u\in V_0};(\xi)_{u\in V_0}))$
& optical quantum walk on $\tilde{G}_0^{(BU,\xi)}$ (Definition~\ref{def:OQW})\\

$\tilde{U}_0$ 
& time evolution operator of the circulant quantum walk on $\tilde{G}_0$ \\

$U^{(BU)}$
&
time evolution operator of the optical quantum walk on $\tilde{G}_0^{(BU,\xi)}$\\

& \\

$\psi_\infty$
& the stationary state of the circulant quantum walk \\

$\psi_\infty^{(BU)}$
& the stationary state of the optical quantum walk \\

& \\

$\chi_u: \mathbb{C}^{\tilde{A}}\to \mathbb{C}^{[\deg(u)]}$
& restriction to $\mathbb{C}^{\{a\in \tilde{A}_0 \;|\; t(a)=u\}}\cong\mathbb{C}^{[\deg(u)]}$ \\
& \quad following the labeling $\xi$ (Sect.~\ref{subsec:CQW})\\

$\iota: \mathbb{C}^{\tilde{A}_0^{(BU,\xi)}}\to \mathbb{C}^{\tilde{A}_0}$ 
& restriction to $\mathbb{C}^{\tilde{A}_0}$ (Sect.~\ref{set:proof})\\

$\eta_u: \mathbb{C}^{\tilde{A}^{(BU,\xi)}_0}\to \mathbb{C}^{A^{cycle}_{0,u} }$ 
& restriction to the island $u$;  $\mathbb{C}^{A^{cycle}_{0,u}}$ (Sect.~\ref{set:proof})\\

$\zeta: \mathbb{C}^{\tilde{A}_0} \to \mathbb{C}^{\tilde{A}_0\setminus tails}$
& restriction to $\mathbb{C}^{\tilde{A}_0\setminus tails}$ (Sect.~\ref{set:proof})
\end{tabular}
\end{table}

\section{Circulant quantum walk on graphs}
\subsection{Setting of the graph and labeling}
Let $G=(V, A)$ be a connected digraph where $V$ is the set of the vertices and $A$ is the set of arcs. If every arc $a\in A$ has the inverse arc $\bar{a}\in A$, we call this graph a symmetric digraph. The terminal and origin vertices of $a\in A$ are denoted by $t(a)$ and $o(a)$, respectively.
Let $G_0=(V_0, A_0)$ be a finite and connected symmetric digraph.  To this original graph $G_0$, in this paper, we connect the semi-infinite path to {\it every} vertex. This resulting infinite graph is denoted by $\tilde{G}_0=(\tilde{V}_0,\tilde{A}_0)$. 
We set the degree of vertex $u\in V_0$ by \[\deg(u):=\{a\in \tilde{A}_0 \;|\; t(a)=u\}=\{a\in \tilde{A}_0 \;|\; o(a)=u\}.\]
The set of boundary of $G_0$; $\partial A_{\pm}$ are defined by 
\[ \partial A_+=\{t(a)\in V_0,\;o(a)\notin V_0\;\},\;\partial A_-=\{o(a)\in V_0,\;t(a)\notin V_0\}. \]
In the following, we introduce the concept of the labeling, which plays an important role to describe the time evolution of the quantum walks treated here.  Let $A_u\subset \tilde{A}_0$ be the set of arcs whose terminal verteices are commonly $u\in \tilde{V}_0$; that is, $A_u=\{a\in \tilde{A}_0\;|\;t(a)=u\}$.
\begin{definition}\label{def:label}Labeling of arcs {\rm :} 
The labeling of the arcs of  $\tilde{G}_0=(\tilde{V}_0,\tilde{A}_0)$ is defined by the series of the bijection maps $(\xi_u)_{u\in \tilde{V}_0}$. Here $\xi_u$ is a bijection map such that 
\[ A_u \to \{ 0,\dots,\deg(u)-1 \}. \]
\end{definition}
Note that the number of choices of the labeling $\xi=(\xi_u)_u$ of $\tilde{G}_0$ is $\prod_{u\in V_0}\deg(u)!$; we choose a labeling from one of these choices. 

\subsection{Circulant quantum walk}\label{subsec:CQW}
In this subsection, we introduce the circulant quantum walk on the infinite graph $\tilde{G}_0=(\tilde{V}_0,\tilde{A}_0)$ with the labeling $\xi=(\xi_u)_{u\in \tilde{V}_0}$ which has tails defined as described in the previous subsection. 
For a discrete set $\Omega$, we define $\mathbb{C}^\Omega$ as the vector space whose standard basis is described by each element of $\Omega$; that is, $\mathbb{C}^\Omega=\mathrm{span}\{\mathrm{\delta_\omega\;|\; \omega\in \Omega}\}$.  
Here $\delta_\omega$ is 
\[ \delta_\omega(\omega')=\begin{cases} 1 & \text{: $\omega=\omega'$,} \\ 0 &\text{: $\omega\neq \omega '$.} \end{cases} \]. 
For a $2$-dimensional unitary matrix assigned at each vertex $u\in V_0$, 
\[ H_u=\begin{bmatrix}a_u & b_u \\ c_u & d_u \end{bmatrix}, \] we introduce the following $\deg(u)\times \deg(u)$- circulant matrix $\mathrm{Circ}(H_u)$ induced by the $2\times 2$-matrix $H_u$, such that  
$[\mathrm{Circ}(H_u)]_{i,j=0}^{\deg(u)-1}=w^{(u)}_{i-j}$, where $i-j$ is the modulus of $\deg(u)$; that is,  
\begin{equation}\label{eq:circulantmat}
\mathrm{Circ}(H_u)=
    \begin{bmatrix}
    w_0^{(u)} & w_{\deg(u)-1}^{(u)} & w_{\deg(u)-2}^{(u)} &  \cdots & w_1^{(u)} \\
    w_1^{(u)} & w_0^{(u)} & w_{\deg(u)-1}^{(u)} & \cdots & w_2^{(u)} \\
    w_2^{(u)} & w_1^{(u)} & w_0^{(u)} & \cdots & w_3^{(u)} \\
    \vdots    & \ddots & \ddots & \ddots & \vdots \\
    w_{\deg(u)-1}^{(u)} & w_{\deg(u)-2}^{(u)} & w_{\deg(u)-3}^{(u)} & \cdots & w_0^{(u)}
    \end{bmatrix}.
\end{equation}
Here $w^{(u)}_\ell$ is defined by 
\begin{align*}
w_0^{(u)}=d_u+\frac{b_uc_u}{1-a_u^{\kappa}}a_u^{\kappa-1}, \;\; 
    w_{\ell}^{(u)}=\frac{b_uc_u}{1-a_u^\kappa}a_u^{\ell-1}\;\;(\ell=1,\dots,\kappa-1).
\end{align*}
Throughout this paper, we assume $a_ub_uc_ud_u\neq 0$ to avoid a trivial dynamics of the quantum walk.
\begin{assumption}\label{assump:1}
For any $u\in V_0$, we assume
$a_ub_uc_ud_u\neq 0$. 
\end{assumption}
The $\deg(u)\times\deg(u)$-matrix; $\mathrm{Circ}(H_u)$, is a unitary matrix;  
see Lemma~\ref{lem:1} for more detail.  

To explain exactly how the quantum walk iterates the time evolution on the graph $\tilde{G}_0$ with the labeling $\xi$ driven by the circulant matrix, let us introduce $\chi_u: \mathbb{C}^{\tilde{A}}\to \mathbb{C}^{[\deg(u)]}$ by the restriction to $\mathbb{C}^{[\deg(u)]}$ such that 
$(\chi_u\psi)(a)=\psi(\xi_u(a))$. The adjoint operator is described by 
\[ (\chi^*_uf)(a) =\begin{cases} f(\xi^{-1}(a)) & \text{: $a\in A_u$,}\\ 0 & \text{: $a\notin A_u$.} \end{cases} \]
A matrix representation of the map $\chi_u$ is expressed by the $\deg(u)\times \infty$ matrix;
\[ \chi_u\cong [I_{\deg(u)}\;|\; 0\;], \]
under the decomposition of the set of arcs into $A_u \sqcup (\tilde{A}\setminus A_u)$. 
Now we are ready to give the definition of the circulant quantum walk on graph $\tilde{G}$ with the labeling $\xi$. 
\begin{definition}\label{def:CQW} Circulant quantum walk on $\tilde{G}_0$:
$QW(G_0;\{H_u\}_{u\in V_0};\{\xi_u\}_{u\in \tilde{V}_0})$ 
\noindent
\begin{enumerate}
    \item The total vector space: $\mathbb{C}^{\tilde{A}_0}$ 
    \item The time evolution operator: $\tilde{U}_0=U(G_0;\{H_u\}_{u\in V_0};\{\xi_u\}_{u\in \tilde{V}_0})=SC$.
    Here $S$ is the flip flop shift operator such that $(S\psi)(a)=\psi(\bar{a})$ for any $\psi\in \mathbb{C}^{\tilde{A}}$, $a\in \tilde{A}$, and $C$ is defined by 
    \[ C=\bigoplus_{u\in \tilde{V}} \chi_u^* C_u \chi_u\]
    under the decomposition of $\mathbb{C}^{\tilde{A}_0}=\oplus_{u\in \tilde{V}_0}\mathbb{C}^{A_u}$, where 
        \[ C_u=\begin{cases} \mathrm{Circ}(H_u) & \text{: $u\in V_0$} \\
        \sigma_X & \text{: $u\notin V_0$,}
        \end{cases} \]
        and $\sigma_X$ is the Pauri matrix. 
    \item The initial state: 
    \[\psi_0(a)=
    \begin{cases}
    1 & \text{: $a\notin A_0$, $\dist(o(a),V_0)>\dist(t(a),V_0)$,}\\
    0 & \text{: otherwise.}
    \end{cases}\] 
\end{enumerate} 
We call this quantum walk the circulant quantum walk on $G_0$. 
\end{definition}
Let us explain the important points of the time iteration of this quantum walk. 
Let $\psi_n$ be the $n$-th iteration by $\psi_{n+1}=\tilde{U}_0\psi_n$.
 The dynamics on the tails is ``free" such that if the arcs of a tail are labeled by  $a_0,a_1,a_2,\dots$ with $o(a_0)\in V_0$, $t(a_j)=o(a_{j+1})$ ($j=0,1,2,\dots$), then 
\begin{equation}\label{eq:tail}
    \begin{bmatrix}
    \psi_{n+1}(a_{j+1}) \\ \psi_{n+1}(\bar{a}_{j})
    \end{bmatrix}
    =\sigma_X
     \begin{bmatrix}
    \psi_{n}(\bar{a}_{j+1}) \\ \psi_{n}(a_{j})
    \end{bmatrix}
    =\begin{bmatrix}
    \psi_{n}(a_{j}) \\ \psi_{n}(\bar{a}_{j+1})
    \end{bmatrix}.
\end{equation}
This means that a quantum walker is perfectly transmitting at each vertex on the tails. Note that the free quantum walk on the tails is independent of the labeling of the vertices. On the other hand, the quantum walk in the internal graph depends on the labeling. At each vertex on the internal graph $G_0$, a quantum walker is scattered by $\mathrm{Circ}(H_u)$ as follows:   
\begin{equation}
    \begin{bmatrix}
    \psi_{n+1}(\overline{\xi_u^{-1}(0)}) \\
    \vdots \\
    \psi_{n+1}(\overline{\xi_u^{-1}(\deg(u)-1)})
    \end{bmatrix}
    = \mathrm{Circ}(H_u) \begin{bmatrix}
    \psi_{n}(\xi_u^{-1}(0)) \\
    \vdots \\
    \psi_{n}(\xi_u^{-1}(\deg(u)-1))
    \end{bmatrix}
\end{equation}
for any $u\in V_0$.
The initial state $\psi_0$ is set so that $\psi_n(a)=\psi_0(a)$ for any $n\geq 0$ and  $a\in \partial A_+$.  
Therefore a quantum walker is provided to the internal graph $G_0$ as the inflow from $\partial A_+$ while a quantum walker is consumed as the outflow to $\partial A_-$ at every time step.
\section{Optical quantum walk on the blow-up graphs}
In this section, we introduce another quantum walk on the blow-up graph induced by $\tilde{G}_0=(\tilde{V}_0, \tilde{A}_0)$ with the labeling $\xi$. 
This quantum walk is called the optical quantum walk. As we will see, the optical quantum walk is implemented by a circuit of the optical polarizing elements in theory. 
Moreover, the stationary state of the optical quantum walk coincides with that of the circulant quantum walk under some conditions. 

To explain the implementation design and the condition in greater detail, let us first introduce the definitions of the blow-up graph and the optical quantum walk precisely. 
\subsection{blow-up graph}\label{def:BlowupGraph}
Let $\tilde{G}_0=(\tilde{V}_0,\tilde{A}_0)$ be the original graph with the tails. Recall that the bijection map $\xi_u:A_u\to [\deg(u)]:=\{0,\dots,\deg(u)-1\}$ is assigned at each vertex $u$ as defined by the previous section. The labeling is denoted by $\xi:=(\xi_u)_{u\in \tilde{V}_0}$. 
Under this setting, the blow-up graph of $\tilde{G}_0$ is defined as follows. See also Fig.~\ref{fig:GOGBU}. 
\begin{definition}\label{def:BUG}
Blow-up graph of $\tilde{G}_0$ with the labeling $\xi$ {\rm :}  $\tilde{G}_0^{(BU,\xi)}=(\tilde{V}_0^{(BU,\xi)},\tilde{A}_0^{(BU,\xi)})$. \\
The vertex and arc sets are defined as follows. 
\begin{align}
    \tilde{V}_0^{(BU,\xi)} &= V_0^{(BU)}\cup (\tilde{V}_0\setminus V_0), \\
    \tilde{A}_0^{(BU,\xi)} &= A_0^{(BU,\xi)} \cup (\tilde{A}_0\setminus A_0).
\end{align}
Here 
\[V_0^{(BU)}:=\sqcup_{u\in V_0}\{ (u;\xi_u(a)) \;|\;t(a)=u \}\] and 
$A_0^{(BU,\xi)}$ is defined as follows. 
There is an arc from $(u;j)\in V_0^{(BU,\xi)}$ to $(v;\ell)\in V_0^{(BU,\xi)}$ in $\tilde{G}_0^{(BU,\xi)}$; that is, 
$((u;j),(v;\ell))\in A_0^{(BU,\xi)}$, if and only if either of the following two conditions is hold.  
\begin{enumerate}
    \item $u=v$ and $\ell=j+1$ in the modulus of $\deg(u)$, \\
    or
    \item $v=o(\xi_u^{-1}(j))$ and $u=o(\xi^{-1}_v(\ell))$ in $\tilde{G}_0$.
\end{enumerate}
Each tail connecting to vertex $u\in V_0$ in the original graph $\tilde{G}_0$ is connected to the vertex $(u;\xi_u(a))$ with $a\in \partial A_+$ in the blow-up graph $\tilde{G}_0^{(BU,\xi)}$. 
\end{definition}
The blow-up graph is constructed by (1) blowing up each vertex $u\in V(G_0)$ as the {\it directed} cycle following the labeling $\xi_u$, and by (2) connecting to other oriented cycles by the symmetric arcs following the original connection in $G_0$.  
Then if vertices $u$ and $v$ are connected in $G_0$, there are symmetric arcs between the directed cycle of $u$ obtained by (1) and the that of $v$ in $\tilde{G}_0$. We call the directed cycle of $u$ in the new graph $\tilde{G}_0^{(BU,\xi)}$ obtained by (1) the island of $u$.  
The set of all the arcs in all the islands is denoted by $A_0^{cycle}$, which is the set of the arcs of the oriented cycles by the blowing up. 
On the other hand, since the arcs obtained by (2) in Definition~\ref{def:BUG} are isomorphic to $\tilde{A}_0$, we denote this set by $\tilde{A}_0$ itself to reduce the number of notations.  
\begin{remark}
The set of arcs $\tilde{A}_0^{(BU,\xi)}$ is divided into 
    \[ \tilde{A}_0^{(BU,\xi)}=A_0^{cycle} \sqcup \tilde{A}_0. \]
\end{remark}
The blow-up graph is not a symmetric directed graph; that is, the existence of the inverse arc is not ensured, and it has more vertices and arcs than the original graph, which would seem to be suggest a complexity. On the other hand, it has the following nice property. 
\begin{remark}\noindent
The blow-up graph is a $2$-regular digraph; that is,   
for every vertex $u\in \tilde{V}_0^{(BU,\xi)}$, the in-degree and the out-degree are $2$;
one pair of in-arc and out-arc belongs to $A_0^{cycle}$, and the other belongs to $\tilde{A}_0$. 
\end{remark}
The reason for replacing the original vertex $u$ in $\tilde{G}_0$ with island $u$ in $\tilde{G}^{(BU,\xi)}$ is to represent the $2$-polarizations of the optics as follows. 
A quantum optics is driven by ``$2$" polarizations represented by $|V\rangle=[1,0]^\top$ and $|H\rangle=[0,1]^\top$. Instead of such a ``complexity" of the blowing up graph, we obtain a representation of the ``$2$"-internal degrees of freedom on every vertex in the blow-up graph because of the ``$2$"-regularity. The subset $\tilde{A}_0$ keeps the fundamental structure of the original graph. This fact will play an important role in the implementation of the circulant quantum walk on graph $\tilde{G}_0$ by a quantum optics.      
\subsection{Optical quantum walk and the motivation}
The time evolution operator of the optical quantum walk is determined by the parameters of the previous circulant quantum walk. The graph of the optical quantum walk is the blow-up graph $\tilde{G}_0^{(BU,\xi)}$. 
Recall that the in- and out- degrees of the blow-up graph are $2$. 
The vector space of the time evolution is represented by $\mathbb{C}^{\tilde{A}_0^{(BU,\xi)}}$. The scattering at each vertex $(u;j)$ is expressed by $H_u$. More precisely, we define the optical quantum walk as follows. 
\begin{definition}\label{def:OQW}
Optical quantum walk on $\tilde{G}_0^{(BU,\xi)}$: $\mathrm{Opt}(QW(G_0;\bs{H};\xi))$.
\begin{enumerate}
    \item The vector space: $\mathbb{C}^{\tilde{A}_0^{(BU,\xi)}}$.
    \item The time evolution: 
    Let $a_{in}\in A_0^{cycle}$, $b_{in}\in \tilde{A}_0$ be the arcs whose terminal vertices are $(u;j)$, and $a_{out}\in A_0^{cycle}$, $b_{out}\in \tilde{A}_0$ be the arcs whose origins are also $(u;j)$. Then the time evolution operator $U^{(BU)}$ is defined as follows: 
    \begin{equation}\label{eq:OQW} \begin{bmatrix}(U^{(BU)}\psi)(a_{out}) \\ (U^{(BU)}\psi)(b_{out}) \end{bmatrix}=H_u \begin{bmatrix}\psi(a_{in}) \\ \psi(b_{in}) \end{bmatrix} 
    \end{equation}
    for any $\psi\in \mathbb{C}^{\tilde{A}_0^{(BU,\xi)}}$. 
    On the tails, the dynamics of the quantum walk is free; that is, it follows (\ref{eq:tail}). 
    \item The initial state:
    \[\psi_0(a)=
    \begin{cases}
    1 & \text{: $a\notin A_0^{(BU,\xi)}$, $\dist(o(a),V_0^{(BU)})>\dist(t(a),V_0^{(BU)})$,}\\
    0 & \text{: otherwise.}
    \end{cases}\] 
\end{enumerate}
\end{definition}
Our interest is how the optical quantum walk ``imitates" the original circulant quantum walk. One of them can be implemented by a quantum optics in theory. An experimental implementation of $\psi_\infty^{(BU)}$ by optical polarizing elements is proposed in Sect.\ref{sect:ei}.  
According to \cite{HS}, both of the stationary states for the circulant  quantum walk and its induced optical quantum walk exist:
\begin{theorem}[\cite{HS}]\label{thm:stationary}
Let $\psi_n$ and $\psi_n^{(BU)}$ be the $n$-th iterations of the circulant quantum walk and its induced optical quantum walk. Then we have 
\[ \exists \lim_{n\to\infty}\psi_n=:\psi_\infty,\; \exists\lim_{n\to\infty}\psi_n^{(BU)}=:\psi_\infty^{(BU)}. \]
\end{theorem}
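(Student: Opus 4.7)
The plan is to reduce Theorem~\ref{thm:stationary} to the general convergence result of \cite{HS}, which states that a unitary quantum walk on a finite graph with semi-infinite tails, driven by a constant inflow from those tails, has iterates that converge to a stationary state. The proof then amounts to checking the hypotheses of that framework for each of the two walks.

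First I would verify the hypotheses for the circulant walk $QW(G_0;\{H_u\};\{\xi_u\})$. The graph $\tilde{G}_0$ is by construction a finite symmetric digraph $G_0$ with a semi-infinite path attached to every vertex. The time evolution $\tilde{U}_0 = SC$ is unitary, since $S$ is a permutation of $\tilde{A}_0$ and $C = \bigoplus_{u \in \tilde{V}_0} \chi_u^{*} C_u \chi_u$ is a direct sum of unitary blocks: $\sigma_X$ on tail vertices, and $\mathrm{Circ}(H_u)$ on $u \in V_0$, whose unitarity is ensured by Lemma~\ref{lem:1} under Assumption~\ref{assump:1}. The initial state $\psi_0$ is the constant inflow vector, equal to $1$ on the incoming tail arcs and $0$ elsewhere on the interior. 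These are exactly the data required by the framework of \cite{HS}, so we obtain $\psi_n \to \psi_\infty$.

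Next I would verify the hypotheses for the optical walk $\mathrm{Opt}(QW(G_0;\{H_u\};\{\xi_u\}))$. The blow-up graph $\tilde{G}_0^{(BU,\xi)}$ is obtained from $\tilde{G}_0$ by replacing each $u \in V_0$ with a directed cycle of length $\deg(u)$ and reattaching the original tails to the island vertices indexed by $\xi_u$ on $\partial A_+ \cup \partial A_-$. By the $2$-regularity of the blow-up graph noted in the remark preceding Section~\ref{def:OQW}, the scattering rule (\ref{eq:OQW}) yields a well-defined unitary $2\times 2$ action of $H_u$ at each vertex $(u;j)$, so $U^{(BU)}$ is unitary. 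The tails inherit the same free $\sigma_X$ dynamics as in the circulant case, and $\psi_0^{(BU)}$ is again a constant inflow on the boundary arcs. Hence \cite{HS} applies once more and $\psi_n^{(BU)} \to \psi_\infty^{(BU)}$.

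The main difficulty is entirely contained inside \cite{HS} and I do not reprove it here; in spirit, the convergence rests on a spectral analysis of the unitary time evolution together with the observation that the constant-inflow dynamics eventually balances the walkers entering through $\partial A_+$ against those exiting through $\partial A_-$, so that the scattering orbit stabilizes in the long-time limit. Our task reduces to verifying that both $\tilde{U}_0$ and $U^{(BU)}$ belong to the class of graph quantum walks with tails treated in \cite{HS}, and the two structural checks above accomplish exactly this.
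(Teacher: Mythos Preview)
Your proposal is correct and matches what the paper does: the paper gives no proof of Theorem~\ref{thm:stationary} at all, simply citing it as a direct consequence of \cite{HS}, and your write-up is exactly the hypothesis-verification needed to justify that citation. There is nothing to compare, since the paper's ``proof'' is the bare reference to \cite{HS}.
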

We can then focus on their stationary states and the condition for the two  stationary states to coincide. 
\begin{definition}\label{def:implement} Notion of the implementation in this paper {\rm: }
We say that the optical quantum walk implements the underlying circulant quantum walk if 
\[ \psi_\infty^{(BU)}(a)=\psi_\infty(a)\text{ for any $a\in \tilde{A}_0$.} \]
\end{definition}
In Sect.~\ref{sect:demonstration}, we give examples by numerical simulation. 
\section{A circuit of optical polarizing elements for the optical quantum walk }\label{sect:ei}
In this section, we propose the design of the optical circuit implementing the optical quantum walk in an ideal environment where the phase is matched in each interference and there is no attenuation. Improvement points for a more realistic design are discussed in the final section. 

First we introduce our idea for the  implementation of the island in $\tilde{G}_0^{(BU,\xi)}$ by using the half wave plate (HWP) and polarizing beam splitter (PBS) so that the output to arbitrary input of this optical circuit is represented by our  circulant matrix. 
Secondly, we explain how to connect each implemented circuit to reproduce the dynamics on the optical QW. 

\subsection{Design of $U^{(BU)}|_{\text{island} \;u }$}\label{subsect:island}
\begin{figure}
    \centering
    \includegraphics[keepaspectratio, width=150mm]{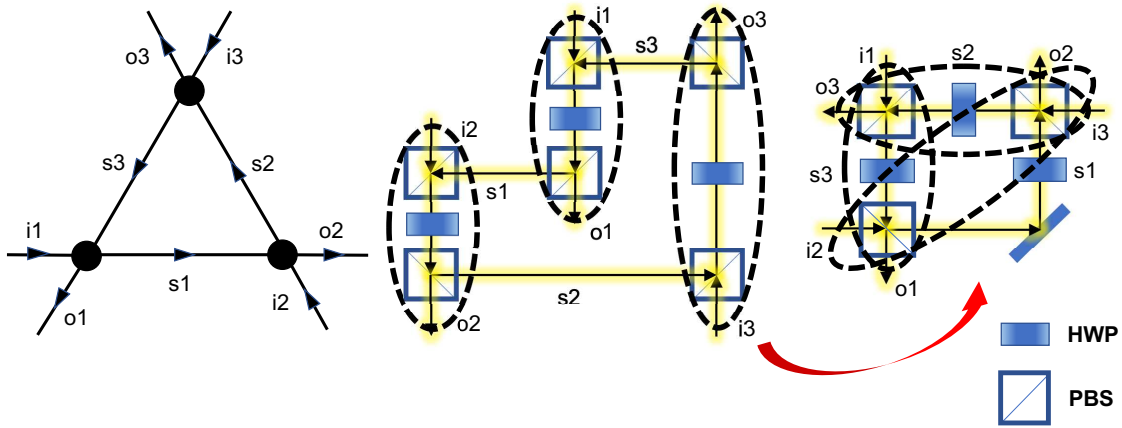}
    \caption{The island of the blow-up graph  (left figure), the island mounted with optical elements (middle figure) and the simplified version of the island mounted with optical elements (right figure): This figure corresponds to the islands of $\tilde{G}^{(BU,\xi)}$ for $N=3$, where each island of $\tilde{G}^{(BU,\xi)}$ is a directed graph with two inputs (i, s) and two outputs (o, s) (for example, in the left figure, we can see the vertices where i1 and s3 are inputs and o1 and s1 are outputs). The island on the right is a simplified equivalent of the middle optical island.}
    \label{fig:a}
\end{figure}
\begin{figure}
    \centering
    \includegraphics[keepaspectratio, width=150mm]{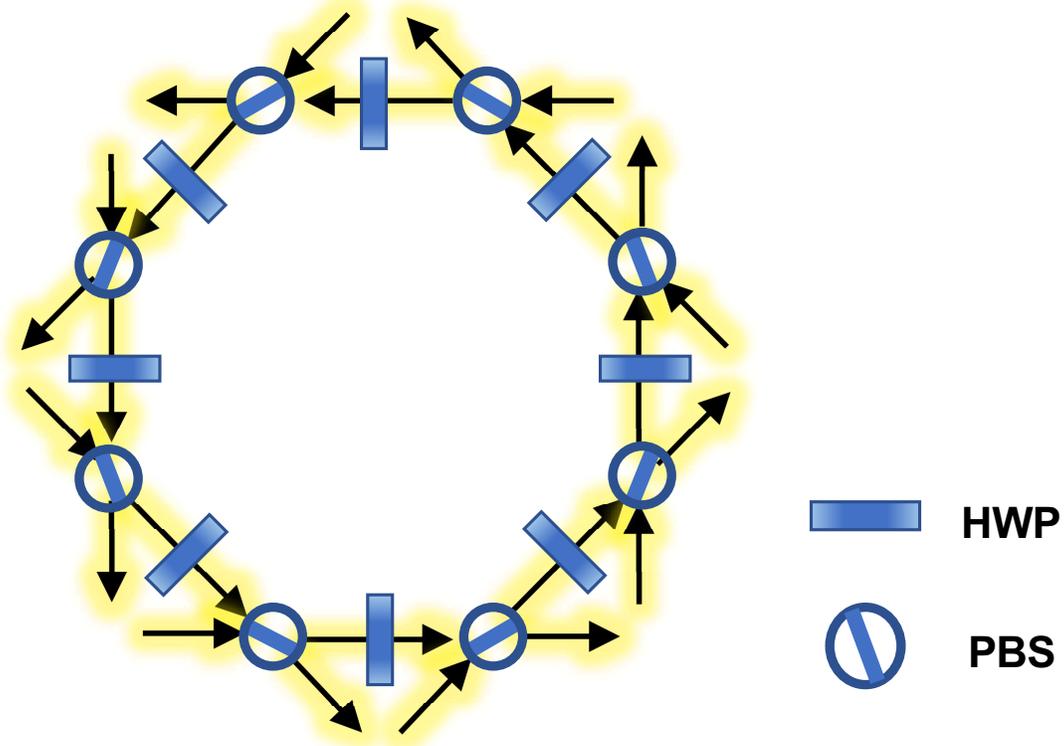}
    \caption{The island implemented with the optical elements shown in the right of Figure 1, with $N$ generalized(the figure shown here is for $N=8$, but it can be extended to general circumscribed polygons): Mirrors are eliminated, and the angle of incidence and reflection at the PBS are not taken into account (originally the angle of incidence and reflection would be 90°, but that is just a matter of changing the direction of the optical path). Each location of HWP corresponds to an individual vertex of the island. Each PBS plays two roles, both receiving the inflows with a superposition of H and V polarizations from the backward HWP and with H polarization from the outside, and sending the outflow with V polarization to the forward HWP and with H polarization to the outside.}
    \label{fig:b}
\end{figure}
In this subsection, we design the ``parts" of the circuit for the optical QW which will be placed on the islands in the blow-up graph. 
The island $u$ is represented by a directed cycle with the tails. Let $\vec{C}_N$ be such a directed graph with $N$ vertices. See Fig.~\ref{fig:a} for the $N=3$ case. 
We introduce an implementation of the quantum walk restricted to the island $\vec{C}_N$ driven by  $2\times 2$ matrix $H$ by optical polarizing elements. Let us explain the implementation by the following three steps. \\

\noindent {\bf Direct implementation ($N=3$).}\\
First, let us focus on the implementation for $N=3$ case.  
The stationary state of the quantum walk on the left figure in Fig.~\ref{fig:a},  $\vec{C}_3$, is described as 
\begin{equation}\label{eq:cycle} \begin{bmatrix} s_{k} \\ o_{k} \end{bmatrix}=H \begin{bmatrix} s_{k-1} \\ i_{k} \end{bmatrix} \end{equation}
for any $k=0,1,2$. 
The middle figure in Fig.~\ref{fig:a} depicts a direct implementation approach in which the dynamics of the quantum walk on the island is mounted with optical elements. Each HWP's in the middle figure is sandwiched between two PBSs (surrounded by dotted lines), which corresponds to one of the vertices of $\vec{C}_3$.    
There are two modes of polarization, H polarization and V polarization. 
The fundamental idea of our implementation is that we establish a correspondence between the arcs of $\vec{C}_3$ and the modes of polarization; that is,  
each arc of the triangle in the left figure corresponds to the V polarization while each arc of the tail in the left figure correspond to the H polarization. 
Since PBS is responsible for transmitting an H polarization and reflecting a V polarization, 
the first PBS in the vertex $k$ ($k=1,2,3$)  represents the situation that the vertex $k$  receives the inflow of quantum walkers from both inside ($s_{k-1}$) and outside ($i_k$) while the second PBS represents the situation that the vertex $k$ sends the outflow to both inside ($s_k$) and outside ($o_k$). We have set the inflow vectors $[s_{k-1},i_{k}]^\top$ and the outflow vectors $[s_k,o_k]^\top$ ($k=1,2,3$) in (\ref{eq:cycle}). 
With PBS alone, there is no factor that can affect the polarization state. Then we set the HWP between the two PBSs, which shifts the phase of each polarization; as a consequence, input H polarization results in a superposition of the H and V polarizations. Thus the sandwiched HWP represents the unitary operator $H$ in (\ref{eq:cycle}). 
In summary, the sandwiched HWP plays the role of the operation of quantum coin $H$, while the first and second PBSs play the role to giving the inflow of a quantum walker to vertex $k$ represented by $[s_{k-1}\;i_k]^\top$ and the outflow of a quantum walker from the vertex $k$ represented by $[s_k\;o_k]^\top$ in (\ref{eq:cycle}), respectively. 

Note that it is possible to make an arbitrary unitary matrix $H$ by setting additional quarter wave plates \cite{polaoptics}.
Also note that there are fixed-end and free-end reflections in $PBS$.
In the situation we are considering here, the V polarization only appears on the $s_k$ corresponding to the vertex $k$ of the island. Therefore, at $i_k$ and $o_k$, where V polarization does not need to be taken into account, there is no reflection at the $PBS$ and the phase $\pi$ is not affected by the fixed-end. For these reasons, the island shown by the right figure in Fig.~1 should be mounted so that the fixed-end faces the outside of the island.\\

\noindent {\bf Economical implementation $(N=3)$}\\
\noindent In the second step, to reduce the optical elements, we introduce an ``economical" design as depicted by the right figure in Fig.~\ref{fig:a}.
Let us explain that we can omit the route between the first PBS in the forward vertex and the second PBS in the backward vertex of the middle figure in Fig.~\ref{fig:a}, which means that we combine the PBS for the inflow placed in vertex $k$ with the PBS for the outflow placed in vertex $k-1$ ($k=0,1,2$). 
In the middle figure of Fig.~\ref{fig:a}, let us focus on the optical route denoted by the arc from the HWP to the second PBS placed in the vertex $3$. Let us denote the state on this arc by $w_3$. This state $w_3$ is described by a superposition of polarizations H and V. The outflow $o_3$ corresponds to the V polarization while $s_3$ corresponds to the H polarization. 
The state $s_3$ will be reflected on the first PBS placed in the vertex $1$. Then the state between the first PBS and HWP placed in the vertex $1$ is described by $s_3+i_1$. In the next, let us focus on the corresponding optical route between the HWP and the combined PBS in the right figure. The state before the combined PBS in the right figure is $w_{3}$ and the state after the combined PBS is split into $s_{3}+i_{1}$ as the inflow to vertex $1$ and $o_3$ as the outflow from vertex $3$. Then the response to the inflows of the left figure is isomorphic to that of the middle figure. Note that the in- and out- flows on each PBS derive from the different vertices in this implementation. \\ 

\noindent {\bf Extension to $N\geq 3$}\\
As a third step, the idea for $N=3$ can be extended to a general $N\geq 3$. 
In Fig.~\ref{fig:b}, we draw the resulting design for general $N$.
The island with the vertices $N$; $\vec{C}_N$, can be considered in the same way as the right figure in Fig.~\ref{fig:a}. 
Each location of the HWP corresponds to each vertex of the island. Each PBS take both roles receiving the inputs from the backward vertex and from the outside, and sending the outputs to the forward vertex and to the outside. Then, the input from the outside corresponds to a quantum walker from the outside to the {\it forward} vertex, while the output to the outside corresponds to a quantum walker to the outside from the {\it backward} vertex. This means that at each PBS, the vertex of the output to the outside shifts that of the input from the outside by one. This observation will be important to design the whole circuit. 

\subsection{Drawing the circuit}\label{subsect:optcircuit}
We build the circuit by combining the above parts (islands). 
We introduce the method drawing the circuit so that each outflow from an island is switched to another inflow to a neighboring island.  
In the following, the blow-up graph $\tilde{G}_0^{(BU,\xi)}$ is deformed to $G'=(HWP\sqcup PBS, A')$ to draw the circuit. After drawing the graph $G'$, we place the half-wave plate on each $HWP$ vertex in the island $u$ and the polarizing beam splitters on each $PBS$ vertex in $G'$. Now let us explain how to draw the graph $G'$ from the blow-up graph $\tilde{G}_0^{(BU,\xi)}$. \\

\noindent{\bf Vertex set}:
We begin by drawing the ``circumscribed polygon" to each original island in $\tilde{G}_0^{(BU,\xi)}$. The vertex set of $G'$ is constructed by all the  corners of the circumscribed polygons, and all the sides of the circumscribed polygons. The vertex subset on the corners is denoted by $PBS$, while the vertex subset on the sides is denoted by $HWP$. The $HWP$ vertex is in one-to-one correspondence with the vertex set of the original graph $\tilde{G}_0^{BU,\xi}$. \\
\noindent{\bf Arc set}:
Following the orientation of each island, the sides of the circumscribed polygon with the subdivision by $HWP$ vertices are replaced with the arcs in $G'$.  
Next, let us define the remaining arcs in $G'$. 
Let us set the arc  
from the islands $u$ to $v$ in $\tilde{G}_0^{(BU,\xi)}$ by $a\in \tilde{A}_0$. 
The origin and terminal vertices of $a$ in $\tilde{G}_0^{(BU,\xi)}$ are denoted by $(u;\xi_u(v))$ and $(v,\xi_v(u))$, respectively.\footnote{Originally, the domain of $\xi_u$ was the arc set whose terminal vertices are $u$ in $\tilde{G}_0$, but since there is a one-to-one correspondence between the above arcs and their origin vertices if $G_0$ is a simple graph, we change the domain to the vertex set for readability. }
In $G'$, the arc $a$ is replaced with the arc whose origin vertex is the $PBS$ vertex located in the corner between the $(u;\xi_u(v))$ and $(u;\xi_u(v)+1)$ and the terminus vertex is the $PBS$ vertex located in the corner between $(v;\xi_v(u))$ and $(v;\xi_v(u)-1)$. The same reconnection procedure is done to every arc in $\tilde{A}_0$. Then the new graph $G'$ is obtained. \\ 

\noindent The reason for the reconnection procedure is as follows. From the consideration of the economical design in subsection~\ref{subsect:island} (see also Fig.~\ref{fig:b}), we know that 
the $PBS$ vertex between the $HWP$ vertex $(u;\xi_u(v))$ and $(u;\xi_u(v)+1)$ must play the two roles, receiving inflow from the islands $\xi_u^{-1}(\xi_u(v)+1)$ and sending outflow to the island $v$. 
It is implies that the forward $PBS$ vertex of the $HWP$ vertex located $(u;\xi_u(v))$ sends the outflow to the island $v$ while the backward $PBS$ vertex of the HWP vertex located in $(u;\xi_u(v)-1)$ receives the inflow from the island $v$. 
By switching the situation of the islands $u$ to $v$, we see that the forward $PBS$ vertex sends the outflow to the island $u$ while the backward vertex receives the inflow from the island $u$. Thus in the optical circuit graph $G'$,
the arc from $(u;\xi_u(v))$ to $(v;\xi_v(u))$ in $\tilde{G}_0^{(BU,\xi)}$ must be replaced with the arc from ``the $PBS$ vertex between the $HWP$ vertices $\xi_u(v)$ and $\xi_u(v)+1$" to ``the $PBS$ vertex between the $HWP$ vertices $\xi_v(v)-1$ and $\xi_v(u)$" in the optical design graph $G'$. Then we realize the situation that each outflow from an island is switched to another inflow to a neighboring island by using the designs of $U^{(BU)}|_{\text{island}\;u}$'s. 
More realistic implementation and arising problems will be discussed in the final section. 
\begin{figure}
    \centering
    \includegraphics[keepaspectratio, width=150mm]{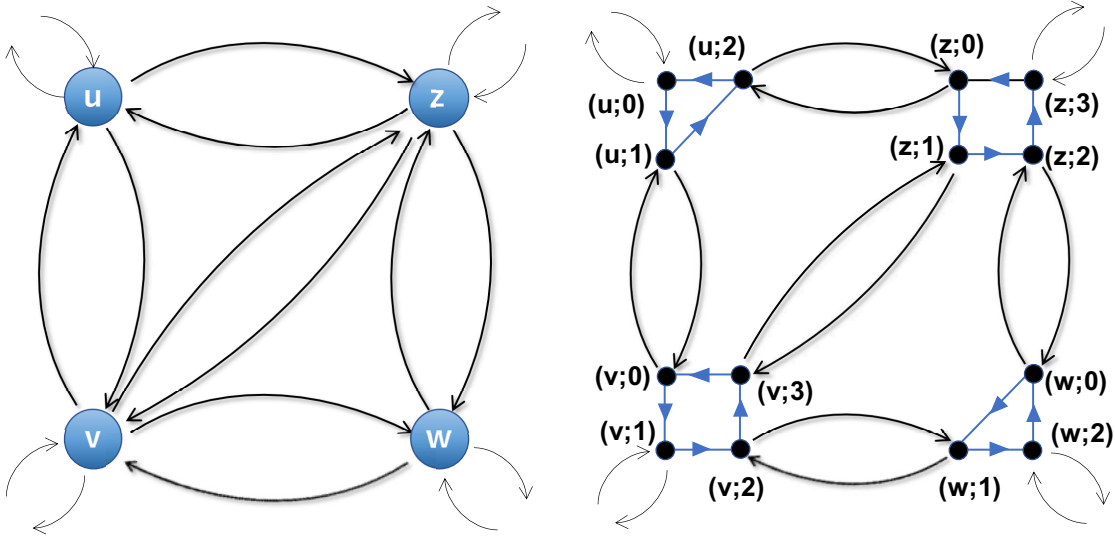}
    \caption{The original graph $G_0$ (left figure) and its blowing up graph $\tilde{G}^{(BU,\xi)}$ (right figure): The labeling $\xi$ is set as follows: $\xi_u(tail)=0$, $\xi_u((v,u))=1$, $\xi_u((z,u))=2$; $\xi_v(tail)=1$, $\xi_v((w,v))=2$, $\xi_v((z,v))=3$, $\xi_v((u,v))=0$; $\xi_w(tail)=2$, $\xi_w((z,w))=0$, $\xi_w((v,w))=1$; $\xi_z(tail)=3$, $\xi_z((u,z))=0$, $\xi_z((v,z))=1$, $\xi_z((w.z))=2$. For example; (i) there is an arc from $(u;2)$ to $(u;0)$ because the terminal vertices are commonly $u$ and $2+1=0$ in the modulus of $\deg(u)=3$, which satisfies the connected condition (1) in Definition~\ref{def:BUG}; (ii) there are symmetric arcs between vertex $(v;2)$ and $(w;1)$ in $\tilde{G}_0^{(BU;\xi)}$, because we can check that $o(\xi_v^{-1}(2))=o((w,v))=w$ and $o(\xi_w^{-1}(1))=o((v,w)))=v$, which satisfies the connected condition (2) in Definition~\ref{def:BUG}.}
    \label{fig:GOGBU}
\end{figure}
\begin{figure}
    \centering
    \includegraphics[keepaspectratio, width=150mm]{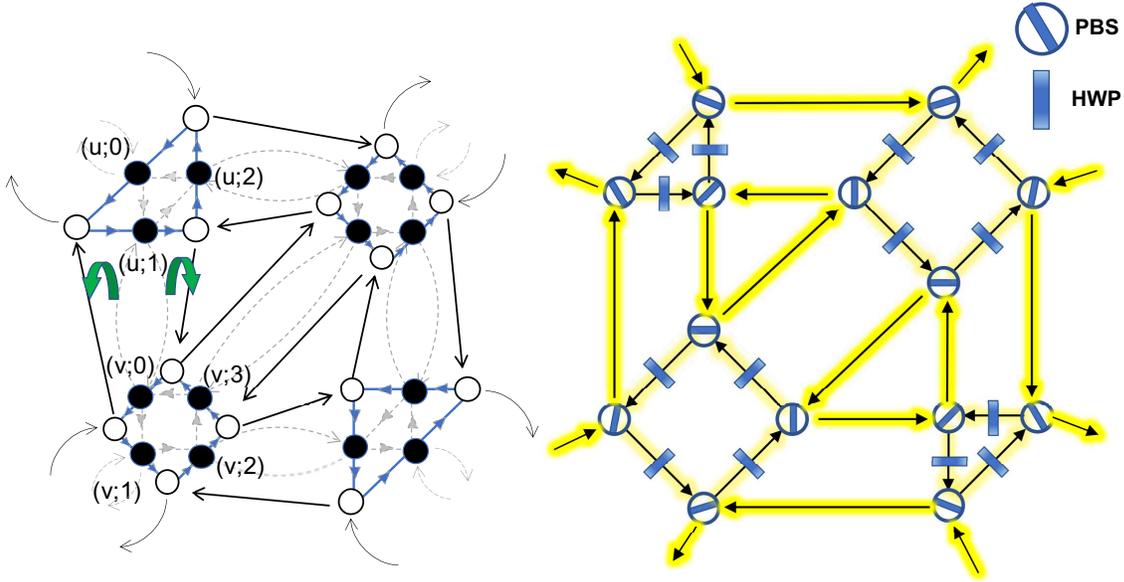}
    \caption{$G'$ (left figure) and the design of the optical circuit (right figure): To highlight the act of drawing the circuit, we include  $\tilde{G}^{(BU,\xi)}$ in gray color in the left figure. We draw the ``circumscribed polygons" around each island (blue arcs) and place the vertices on the corners and in the midpoint of each side. Then, the sides of the circumscribed polygon correspond to  $A_0^{cycle}$. On the other hand, all the arcs in $\tilde{A}_0$ are reconnected as depicted in the left figure. For example, the symmetric arcs between $(u;1)$ and $(v;0)$ are reconnected by changing the origin and terminal vertices to the corresponding corners of the circumscribed polygon. We set the half wave plate on each black vertex and the polarizing beam splitter on each white vertex as depicted in the right figure. }
    \label{fig:GprimeO}
\end{figure}
\section{Demonstration by numerical simulations}\label{sect:demonstration}
In this section, we examine whether a circulant quantum walk on the complete graph with $10$ vertices, $K_{10}$, is implemented by the corresponding optical quantum walk by changing ${\bs H}$.  
The arc whose terminus is $i$ and origin is $j$ is denoted by $(i,j)$, and the arc whose terminus is $i$ and origin is located in the tail is denoted by $(i,i)$ in $\tilde{G}_0$. The labeling  $\xi=(\xi_0,\dots,\xi_{9})$ is given as follows:  
\begin{align}\label{eq:label}
    \xi_i((i,j)) &= j 
\end{align}
for every $i,j=0,\dots,9$. 
We consider following two examples. First, we consider the case in which the the implementation is realized, and then we consider the case that the implementation fails. 
For this purpose, we define a relative probability of the circulant QW by 
    \[ \mu_n(j)=\sum_{a\in \tilde{A}_0:t(a)=j} |\psi_n(a)|^2. \]
We also set
    \[\mu_n^{BU}(j)=\sum_{a\in \tilde{A}_0:t(a)=j} |\psi_n^{BU}(a)|^2,\]
and chase their time courses $\mu_n$ and $\mu_n^{BU}$ simultaneously.
Note that the summation is taken over $\tilde{A}_0$ in the definition of $\mu_n^{(BU)}$ because the ``implementation" is determined at the original arcs of $G_0$ by Definition~\ref{def:implement}. 
The existence of limits of $n\to\infty$ to $\mu_n$ and $\mu_n^{BU}$ is ensured by Theorem~\ref{thm:stationary}. We call  $\lim_{n\to\infty}\mu_n$ a stationary measure of the circulant QW. \\

\noindent {\bf Example 1.} {\it $K_{10}$ with a marked vertex} :
We choose the vertex $0$ in the set of vertices $\{0,\dots,9\}$ as the marked vertex.   
The circulant coin assigned at each vertex is denoted by $\mathrm{Circ}(H_j)$. 
Here $H_j$ is set so that a perturbation is given only at the marked element $0$ as follows. 
\[ H_j=
\begin{cases} 
\begin{bmatrix}
1/\sqrt{2} & 1/\sqrt{2} \\ 1/\sqrt{2} & -1/\sqrt{2} 
\end{bmatrix} & \text{: $j=0$} \\
\\
\begin{bmatrix}
1/\sqrt{2} & 1/\sqrt{2} \\ -1/\sqrt{2} & 1/\sqrt{2} 
\end{bmatrix} & \text{: otherwise}
\end{cases} \]
for $j=0,\dots,9$.
Figure~\ref{Fig:1} shows the time course of the relative probability at each vertex $0,\dots,9$. The blue curve describes the time course of the relative probability of the vertex $0$, which is the marked vertex. We observe that although the time scales of the convergence are different, the stationary measures converge to the same value for every vertex. This is theoretically supported by Corollary~\ref{cor:graph} ,because there are arcs in $G_0^{(BU;\xi)}$ satisfying condition (2) in the corollary; these arcs are $(0,1)$ and $(0,9)$ and their inverses. 
\begin{figure}
    \begin{tabular}{cc}
      \begin{minipage}[t]{0.45\hsize}
        \centering
        \includegraphics[keepaspectratio, width=70mm]{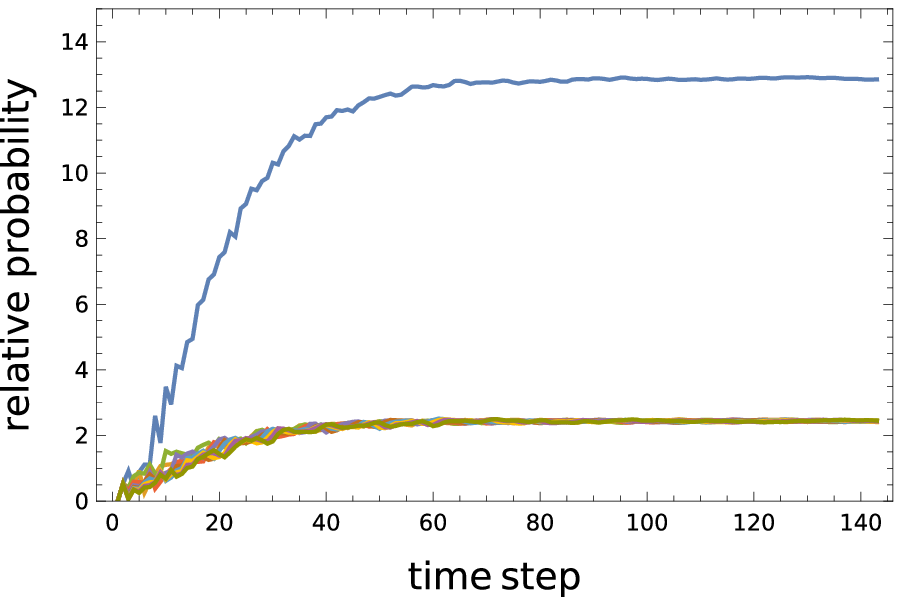}
        \caption{Time course of $\mu_n$ for the circulant QW with a marked vertex in the setting of Example~1: the horizontal and vertical lines are the time step and the relative probability of each vertex. The blue line is the time course of the relative probability of the marked vertex, the other lines are the time courses of the other vertices.  }
        \label{Fig:1}
      \end{minipage} &
      \begin{minipage}[t]{0.45\hsize}
        \centering
        \includegraphics[keepaspectratio, width=70mm]{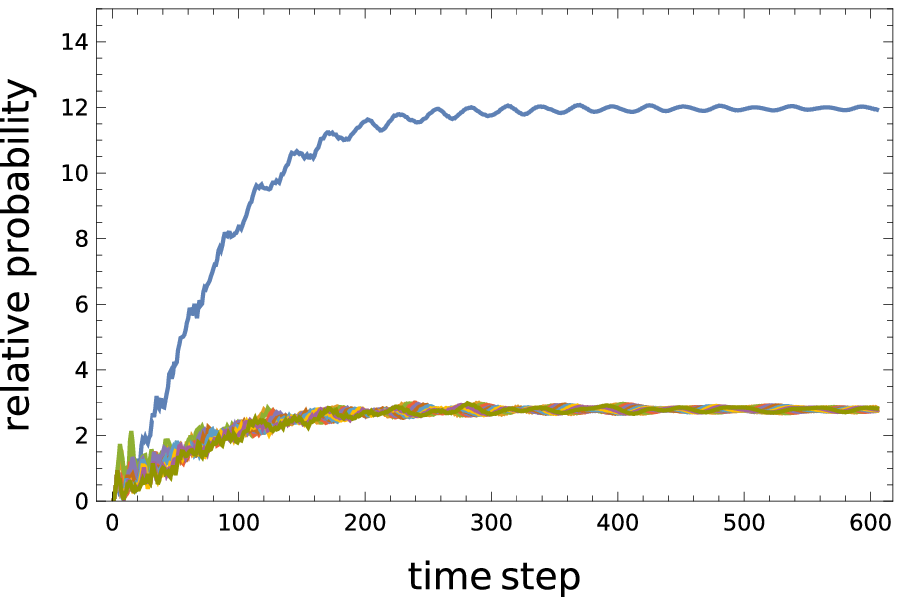}
        \caption{Time course of $\mu_n^{BU}$ for the optical QW with a marked vertex induced by the left circulant QW}
        \label{Fig:2}
      \end{minipage}
    \end{tabular}
  \end{figure}

\noindent {\bf Example 2.}($K_{10}$ with the uniform setting) We assign $H_j$'s uniformly by
\[ H_j=\begin{bmatrix} 1/\sqrt{2} & 1/\sqrt{2} \\ 1/\sqrt{2} & -1/\sqrt{2} \end{bmatrix} \]
for any $j=0,\dots,9$. 
From the symmetry on the time evolution with respect to each vertex, the stationary measure on each vertex is the same.  
We observe that although both the circulant QW and its induced optical QW converge to some stationary measures, the convergence values are different. 
\begin{figure}
    \begin{tabular}{cc}
      \begin{minipage}[t]{0.45\hsize}
        \centering
        \includegraphics[keepaspectratio, width=70mm]{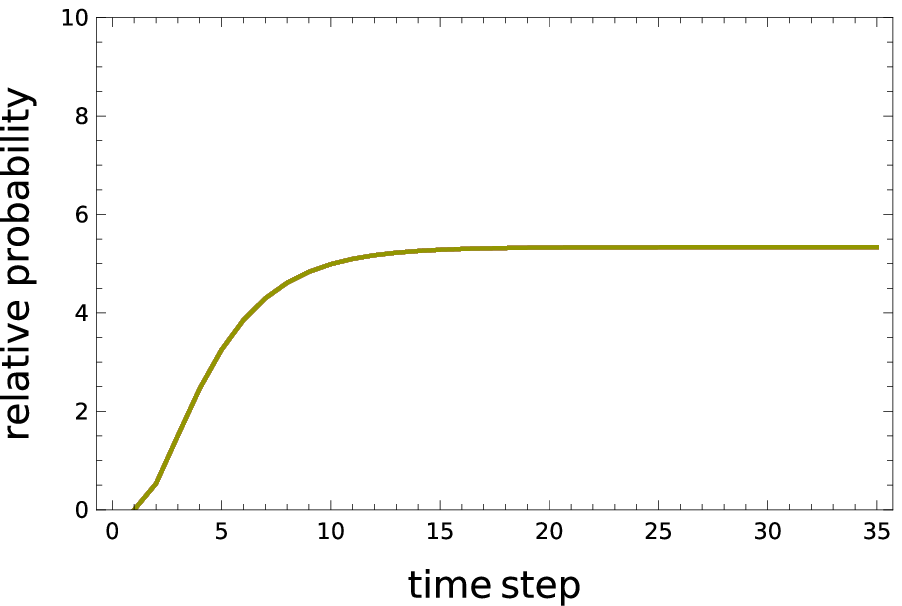}
        \caption{Time course of the relative probabilities of the uniform circulant QW in the setting of Example~2.}
        \label{Fig:1'}
      \end{minipage} &
      \begin{minipage}[t]{0.45\hsize}
        \centering
        \includegraphics[keepaspectratio, width=70mm]{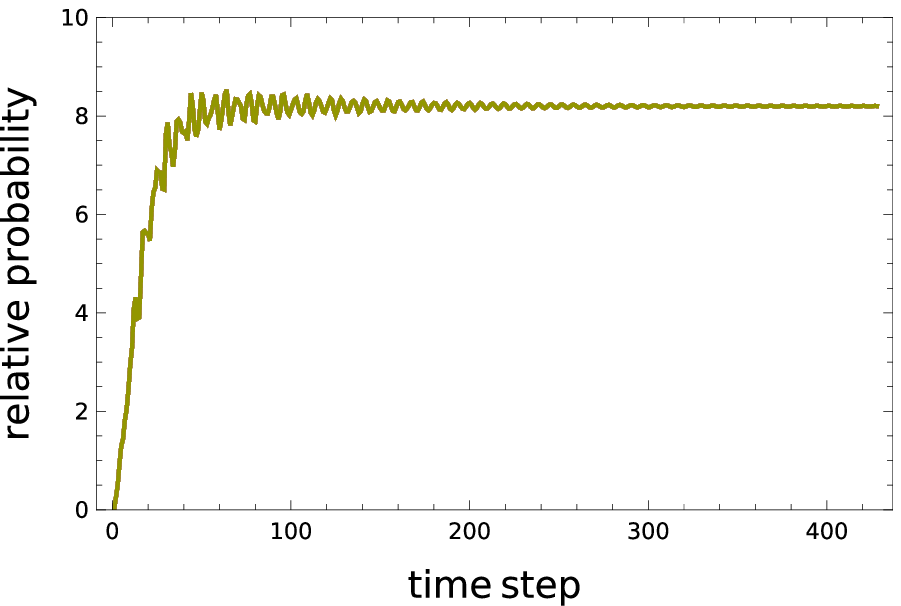}
        \caption{Time course of the relative probabilities of the uniform optical QW induced by the left circulant QW}
        \label{Fig:2'}
      \end{minipage}
    \end{tabular}
  \end{figure}

\section{Mathematical results}
The following sufficient condition for the accomplishment of the implementation is the key to our main results.  
\begin{proposition}\label{thm:key}
Let $\bs{H}=(H_u)_{u\in V_0}$ and $\xi=(\xi_u)_{u\in V}$ in the setting of the circulant quantum walk $QW(G_0;\bs{H};\xi)$ on $\tilde{G}_0$. 
If $\ker(1-U^{(BU)})=\{\bs{0}\}$, 
then  
$\mathrm{Opt}(QW(G_0;\bs{H};\xi ))$ on $\tilde{G}_0^{(BU,\xi)}$ implements $QW(G_0;\bs{H};\xi)$ on $\tilde{G}_0$ .  
\end{proposition}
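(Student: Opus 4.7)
The plan is to lift the stationary state $\psi_\infty$ of the circulant walk to a candidate state $\Psi$ on the blow-up graph that is fixed by $U^{(BU)}$ and carries the same inflow data as $\psi_\infty^{(BU)}$; the hypothesis $\ker(1-U^{(BU)})=\{\bs{0}\}$ then forces $\Psi=\psi_\infty^{(BU)}$, so that restricting to $\tilde{A}_0$ via $\iota$ recovers $\psi_\infty$ as required by Definition~\ref{def:implement}.

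Concretely, the existence of $\psi_\infty$ is guaranteed by Theorem~\ref{thm:stationary}. I would define $\Psi\in\mathbb{C}^{\tilde{A}_0^{(BU,\xi)}}$ by $\Psi|_{\tilde{A}_0}:=\psi_\infty$, and prescribe the values on $A_0^{cycle}$ island by island. On an island $u$ of length $\kappa:=\deg(u)$, let $i_k$ denote the inflow at position $k$ (read off from $\psi_\infty$ through the labeling $\xi_u$) and $s_k$ the value of $\Psi$ on the cycle arc leaving position $k$. The top row of $H_u$ reads $s_k=a_us_{k-1}+b_ui_k$; iterating around the closed cycle yields $(1-a_u^\kappa)s_0=b_u\sum_{j=0}^{\kappa-1}a_u^j i_{-j}$, which admits a unique solution because $a_u^\kappa\neq 1$ (already implicit in~\eqref{eq:circulantmat}). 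Thus $\Psi$ is unambiguously defined.

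Next I would verify the local fixed-point equation~\eqref{eq:OQW} at every island vertex. The cycle row holds by construction. For the outflow row $o_k=c_us_{k-1}+d_ui_k$, substituting $s_{k-1}=\tfrac{b_u}{1-a_u^\kappa}\sum_{j=0}^{\kappa-1}a_u^j i_{k-1-j}$ and reindexing gives $o_k=\sum_{\ell=0}^{\kappa-1}w_\ell^{(u)}i_{k-\ell}$, which is exactly the $k$-th row of $\mathrm{Circ}(H_u)$ applied to the inflow vector; this matches the circulant stationarity $\tilde{U}_0\psi_\infty=\psi_\infty$ at $u$. On the tails both walks reduce to the free shift~\eqref{eq:tail}, so $\Psi$ inherits the correct inflow boundary values from $\psi_\infty$.

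Finally, since both $\Psi$ and $\psi_\infty^{(BU)}$ satisfy $U^{(BU)}\Phi=\Phi$ with identical inflow data, their difference vanishes on the inflow tail arcs and lies in $\ker(1-U^{(BU)})$, hence is $\bs{0}$ by hypothesis; restricting to $\tilde{A}_0$ then gives $\iota(\psi_\infty^{(BU)})=\psi_\infty$. The main obstacle is this final uniqueness step: because $\mathbb{C}^{\tilde{A}_0^{(BU,\xi)}}$ is infinite-dimensional, one must justify that the difference actually belongs to the class of vectors for which the kernel hypothesis applies. This will likely require using the time-iteration construction of $\psi_\infty^{(BU)}$ from Theorem~\ref{thm:stationary} to pin down the outflow on the tails uniquely from the inflow, so that $\Psi-\psi_\infty^{(BU)}$ is sufficiently well-behaved (zero on inflow tails and with matching outflow asymptotics) for the hypothesis to close the argument.
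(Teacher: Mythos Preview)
Your approach is correct in outline but runs in the opposite direction from the paper's. The paper starts from the optical stationary state $\psi_\infty^{(BU)}$, shows (Lemma~\ref{lem:1} and Proposition~\ref{prop:1}) that its restriction $\iota\psi_\infty^{(BU)}$ satisfies the circulant fixed-point equation, then proves uniqueness on the \emph{circulant} side (Proposition~\ref{prop:3}), after first transferring the kernel hypothesis to $U_0$ via the equivalence $\ker(1-U^{(BU)})=\{\bs{0}\}\Leftrightarrow\ker(1-U_0)=\{\bs{0}\}$ (Proposition~\ref{prop:2}). You instead lift $\psi_\infty$ to the blow-up graph and invoke uniqueness on the \emph{optical} side. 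Your island computation is exactly the content of Lemma~\ref{lem:1} read backwards (indeed the paper uses precisely your extension formula in~\eqref{eq:extension}), so that part is fine.

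The obstacle you flag is genuine but has a cleaner resolution than the time-iteration argument you suggest. The difference $\Psi-\psi_\infty^{(BU)}$ has zero inflow, so on the tails its outflow values are constant along each tail; to conclude it lies in the $\ell^2$ kernel you must show those constants vanish. The paper's device for this is to restrict to the \emph{finite} internal graph: writing $E^{(BU)}$ for $U^{(BU)}$ cut down to the internal blow-up arcs, one shows exactly as in the proof of Proposition~\ref{prop:3} that $\ker(1-U^{(BU)})=\{\bs{0}\}$ forces $(1-E^{(BU)})$ to be invertible, whence the internal part of the difference is zero and the outflow constants follow. So your route can be completed with the same finite-dimensional trick the paper uses, just applied one level up; what you gain is that you never need Proposition~\ref{prop:2}, while the paper's route keeps the uniqueness argument on the smaller graph $\tilde{G}_0$.
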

\begin{proof}
See Section~\ref{set:proof}.
\end{proof} 
The following theorem gives a sufficient condition for $\ker(1-U^{(BU)})= \{\bs{0}\}$,  which illustrates two kinds of the setting of the optical quantum walk. See Figures~\ref{fig:setting1} and \ref{fig:setting2}; a kind of symmetry is broken around the boundary because (1) the rotational orientations of the connected islands are opposite each other or (2) the assigned coins of the connected islands are different. 
\begin{theorem}\label{cor:graph} {\rm (The symmetry breaking designs for the implementation)}\\
Assume that 
\[H_u=\begin{bmatrix} a_u & b_u \\ c_u & d_u \end{bmatrix}\] satisfies $a_ub_uc_ud_u\neq 0$ for any $u\in V_0$. If there is an arc $e\in A_0$ satisfying each of them:  
\begin{enumerate}
    \item $\xi_{o(e)}(\bar{e})-\xi_{o(e)}(t)=\xi_{t(e)}(e)-\xi_{t(e)}(\tau)=\pm 1$,  or 
    \item $\xi_{o(e)}(\bar{e})-\xi_{o(e)}(t)=-(\xi_{t(e)}(e)-\xi_{t(e)}(\tau))=\pm 1$ and $d_{o(e)}\neq d_{t(e)}^*$, 
\end{enumerate} 
where $t,\tau\in \partial A_+$ whose terminal vertices are located in the islands $o(e)$ and $t(e)$, respectively,  
then $\mathrm{Opt}(QW(G_0;\bs{H};\xi ))$ implements $QW(G_0;\bs{H};\xi)$.
Here for a complex number $z$, $z^*$ is the conjugate of $z$. 
\end{theorem}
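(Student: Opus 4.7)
By Proposition~\ref{thm:key} it suffices to show $\ker(1-U^{(BU)})=\{\bs{0}\}$. So I would take $\phi\in\ell^2(\tilde A_0^{(BU,\xi)})$ satisfying $U^{(BU)}\phi=\phi$ and argue $\phi\equiv 0$. On each tail, the free shift (\ref{eq:tail}) together with the eigenvalue equation forces $\phi$ to be constant in each direction along the tail, so $\ell^2$-integrability kills $\phi$ on every tail arc. Substituting these zero tail values into the local relation (\ref{eq:OQW}) at each tail-attached vertex $(u;\xi_u(t))$, the second row reads $c_u\phi(a_{in})=0$; Assumption~\ref{assump:1} ($c_u\neq 0$) then gives $\phi(a_{in})=0$, and the first row gives $\phi(a_{out})=0$. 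Thus both cycle arcs incident to $(u;\xi_u(t))$ carry value $0$.

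Next I would exploit the distinguished arc $e$. Write $u=o(e),\ v=t(e),\ j_0=\xi_u(\bar e),\ \ell_0=\xi_v(e),\ t_u=\xi_u(t),\ \tau_v=\xi_v(\tau)$. Both hypotheses impose $j_0-t_u=\pm 1$ and $\ell_0-\tau_v=\pm 1$, so $(u;j_0)$ is an immediate cycle-neighbor of $(u;t_u)$; hence exactly one of the two cycle arcs at $(u;j_0)$ is shared with $(u;t_u)$ and carries $\phi$-value $0$ by the previous step, and analogously at $(v;\ell_0)$. Feeding that zero into (\ref{eq:OQW}) at $(u;j_0)$ and eliminating the remaining cycle unknown produces a scalar relation $\phi(e)=\mu_u\,\phi(\bar e)$, where
\[\mu_u\;=\;\begin{cases}d_u & \text{if } j_0=t_u+1,\\[2pt]\det(H_u)/a_u=1/d_u^{\ast}&\text{if } j_0=t_u-1,\end{cases}\]
and the second formula uses the $2\times 2$-unitarity identity $d_w^{\ast}=a_w/\det H_w$. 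The parallel argument at $(v;\ell_0)$ gives $\phi(\bar e)=\mu_v\,\phi(e)$ with $\mu_v\in\{d_v,\,1/d_v^{\ast}\}$. Composition delivers the single scalar equation
\[(1-\mu_u\mu_v)\,\phi(\bar e)=0.\]

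The crux is then $1-\mu_u\mu_v\neq 0$. Under condition (1) the two signs agree, so $\mu_u\mu_v\in\{d_ud_v,\,1/(d_ud_v)^{\ast}\}$; Assumption~\ref{assump:1} combined with unitarity of $H_w$ forces $|d_w|<1$, hence $|d_ud_v|<1$ and $1-\mu_u\mu_v\neq 0$. Under condition (2) the signs disagree, so $\mu_u\mu_v\in\{d_u/d_v^{\ast},\,d_v/d_u^{\ast}\}$, and the added hypothesis $d_u\neq d_v^{\ast}$ is exactly $1-\mu_u\mu_v\neq 0$. Either way $\phi(e)=\phi(\bar e)=0$, and back-substituting into the local relations at $(u;j_0)$ and at $(v;\ell_0)$ kills all four scattering variables there.

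It remains to propagate this local vanishing to the rest of the blow-up graph. Walking around each of the islands $u,v$ and using the cycle recurrence $\phi(a_{in}(\cdot;j+1))=a_u\phi(a_{in}(\cdot;j))+b_u\phi(b_{in}(\cdot;j))$ together with the two already-established zero cycle arcs at $\{(u;t_u),(u;j_0)\}$ and $\{(v;\tau_v),(v;\ell_0)\}$, and then pushing zeros through the remaining external arcs of $\tilde A_0$ using connectedness of $G_0$, should cascade the vanishing until $\phi\equiv 0$ on the whole blow-up graph. This final propagation is what I expect to be the main obstacle: the single arc $e$ only supplies a local scalar constraint, so formalising the cascade will likely require an induction on the graph distance from $\{u,v\}$ in $G_0$, combined with the rigidity of the island-level effective scattering (which is exactly $\mathrm{Circ}(H_u)$), so that a single zero on an external arc together with the tail-induced zero forces the entire bound-state system on $\tilde G_0$ to collapse.
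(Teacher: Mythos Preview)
Your local argument---vanishing on all tail arcs, vanishing of both cycle arcs at every tail-attached vertex $(u;\xi_u(t))$, and the two scalar relations $\phi(e)=\mu_u\phi(\bar e)$, $\phi(\bar e)=\mu_v\phi(e)$ at the distinguished arc---is exactly what the paper does. The paper packages the first two facts as the statement $\supp(\psi)\cap\bigl(\cup_u B_u\bigr)=\emptyset$ (your step is a cleaner derivation of the same thing), and then writes down the same pair of $2\times 2$ systems at the two endpoints of $e$ that you do.

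The divergence is only in how the argument is closed. The paper does \emph{not} attempt the global propagation you outline. It argues by contraposition: assuming a nontrivial $(+1)$-eigenvector $\psi$ exists, the two relations give (in case~(1)) $\psi(\bar e)=d_{o(e)}^{*}\psi(e)$ and $\psi(e)=d_{t(e)}^{*}\psi(\bar e)$, hence $|d_{o(e)}d_{t(e)}|=1$, contradicting $|d_w|<1$ from Assumption~\ref{assump:1}; in case~(2) they give $d_{o(e)}^{*}=d_{t(e)}$, contradicting the extra hypothesis. So the paper stops at your equation $(1-\mu_u\mu_v)\phi(\bar e)=0$ and reads it the other way round: rather than concluding $\phi(\bar e)=0$ and then trying to cascade, it asserts $\phi(\bar e)\neq 0$ and harvests $\mu_u\mu_v=1$ as the contradiction.

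That said, your worry about the propagation step is entirely legitimate, and the paper does not actually close the gap you have spotted. The contrapositive reading requires $\psi(e)\neq 0$, and the paper simply writes ``From the unitarity of $H_{o(a)}$ and $H_{t(a)}$, we have $\psi(a),\psi(\bar a)\neq 0$'' without justification. If $\psi$ happened to vanish on $e,\bar e$ (but not globally), neither the paper's contradiction nor your local vanishing would yield $\ker(1-U^{(BU)})=\{0\}$. Your cascade idea is the honest attempt to fill this; but as you suspect, one extra zero external arc on top of the tail-induced zeros is not obviously enough to force full collapse---Lemma~\ref{lem:Kn} already shows the $(+1)$-eigenspace restricted only by the $B_u$-vanishing can have dimension of order $N$ on $K_N$. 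So the propagation you flag as ``the main obstacle'' is a genuine missing piece, and it is missing from the paper's argument as well.
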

\begin{figure}
    \begin{tabular}{cc}
      \begin{minipage}[t]{0.45\hsize}
        \centering
        \includegraphics[keepaspectratio, width=75mm]{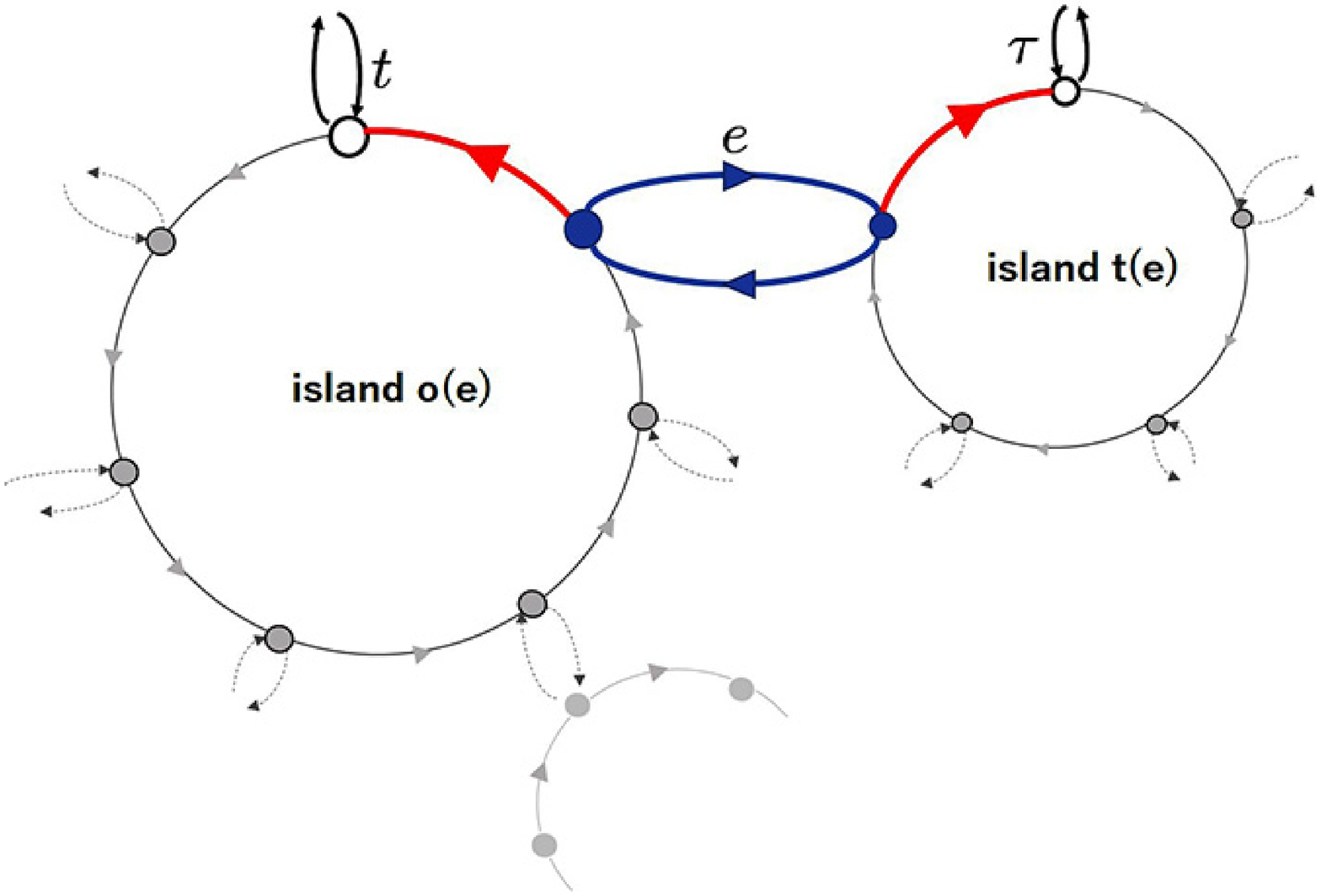}
        \caption{The symmetry breaking design in Theorem~\ref{cor:graph} (1) for the implementation }
        \label{fig:setting1}
      \end{minipage} &
      \begin{minipage}[t]{0.45\hsize}
        \centering
        \includegraphics[keepaspectratio, width=75mm]{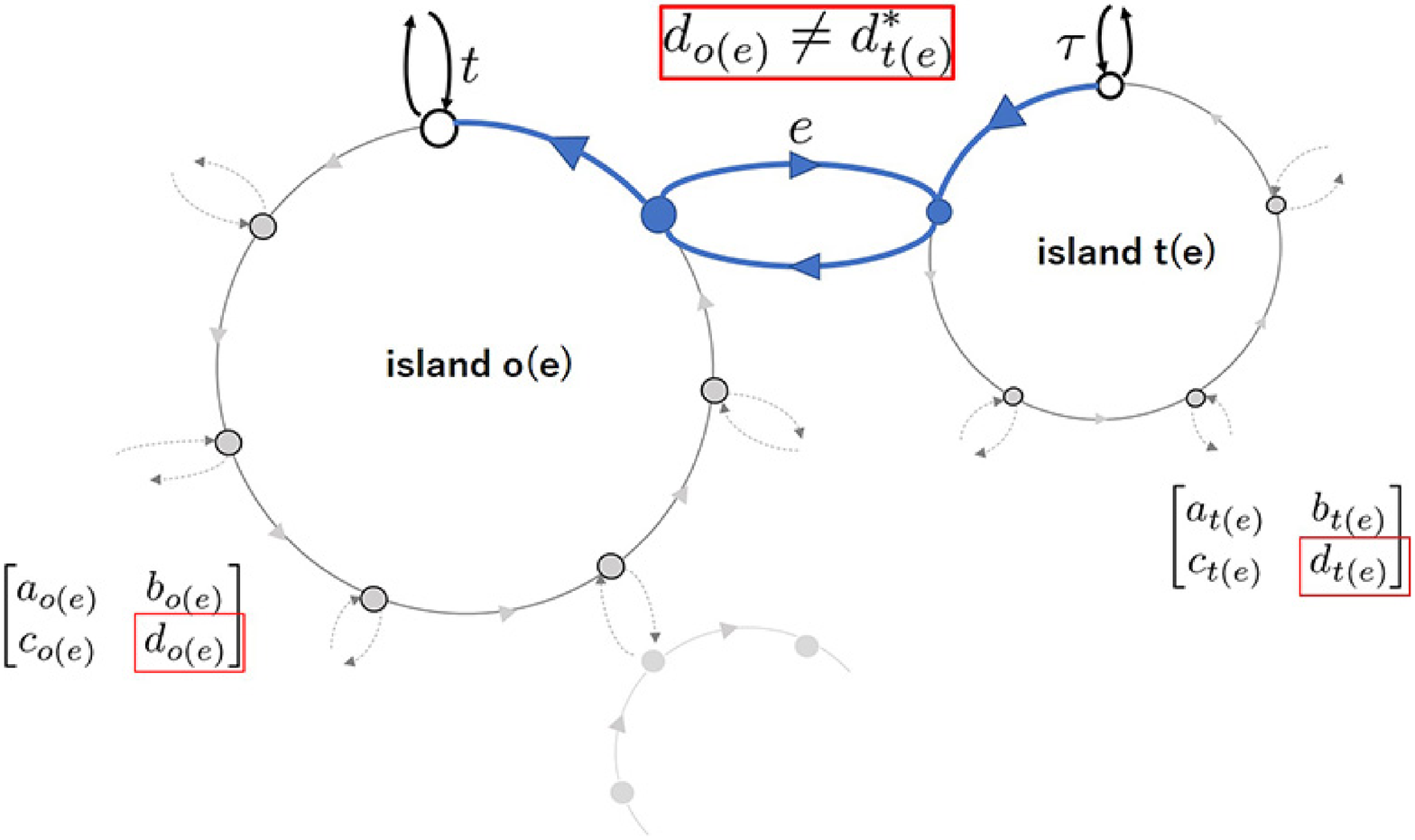}
        \caption{The symmetry breaking design in Theorem~\ref{cor:graph} (2) for the implementation}
        \label{fig:setting2}
      \end{minipage}
    \end{tabular}
  \end{figure}
Example~1 in Section~\ref{sect:demonstration} matches the setting of case (2) in Corollary~\ref{cor:graph}. This is the reason that the stationary state of the circulant QW coincides with that of its optical QW. 
Now we give the proof using Proposition~\ref{thm:key}. 
\begin{proof}
The arc in $\partial A_+$ whose terminal vertex is $u\in V_0$ is denoted by $\tau_u$.
We set $B_u\subset A^{(BU,\xi)}_0$ as the set of arcs whose terminal or origin vertices are $(u,\xi_u(\tau_u))$ (see Fig.~\ref{fig:Thm6.1}):  
\[ B_u= \{a\in A^{(BU,\xi)}_0 \;|\; o(a)=(u,\xi_u(\tau_u))\text{\;or\;}t(a)=(u,\xi_u(\tau_u))\}. \]

The eigenvector of eigenvalue $1$ is denoted by $\psi$. Note that $\psi$ satisfies  not only $U^{BU,\xi}\psi =\psi$ but also $||\psi||^2<\infty$. 
First, let us see the following fact holds: 
\begin{equation}\label{eq:condition1}
    \supp(\psi) \cap (\cup_{u\in V_0}B_u) =\emptyset. 
\end{equation}
In the tail, $\psi(\tau_u)=\psi(a_1)=\psi(a_2)=\cdots$ and $\psi(\bar{\tau}_u)=\psi(\bar{a}_1)=\psi(\bar{a}_2)=\cdots$ (where $o(\tau_u)=t(a_1),o(a_1)=t(a_2),\dots$) holds.
Then to ensure $||\psi||<\infty$, the values $\psi(\tau_u)$ and $\psi(\bar{\tau}_u)$ must be $0$. 
From the definition of the time evolution operator of this quantum walk, letting $e_{in}',e_{out}'\in B_u$, we have 
    \[ \begin{bmatrix} \psi(e_{out}') \\ \psi(\bar{\tau}_u) \end{bmatrix}=\begin{bmatrix} a_u & b_u \\ c_u & d_u \end{bmatrix}\begin{bmatrix} \psi(e_{in}') \\ \psi(\bar{\tau}_u) \end{bmatrix}, \]
which is equivalent to  
\[ \psi(\tau_u)=(\psi(e_{out}')-a_u\psi(e_{in}'))b_u^{-1},\;\psi(\bar{\tau}_u)=c_u\psi(e_{in}')+(\psi(e_{out}')-a_u\psi(e_{in}'))b_u^{-1}d_u. \]
This implies that $\psi(\tau_u),\psi(\bar{\tau}_u)= 0$ if and only if  $\psi(e_{out}'),\psi(e_{in}')=0$. Then (\ref{eq:condition1}) holds. 

Considering the contraposition of Proposition~\ref{thm:key}, we notice that it is enough to show that if $\ker(1-U^{BU})\neq\{\bs{0}\}$, then 
there are no arcs in $A_0$ satisfying (1) or (2). 
For $a\in A_0$, let us put $e_1,e_2\in A^{cycle}$ with $t(e_1)=(o(a);\xi_{o(a)}(\bar{a}))$ and $t(e_2)=(t(a);\xi_{t(a)}(a))$ (see Figure~\ref{fig:Thm6.1}).
Assume (1) holds, and consider the $+1$ case. By (\ref{eq:condition1}), we have
\[ \begin{bmatrix} 0 \\ \psi(a) \end{bmatrix} = \begin{bmatrix} a_{o(a)} & b_{o(a)} \\ c_{o(a)} & d_{o(a)}\end{bmatrix}\begin{bmatrix} \psi(e_1) \\ \psi(\bar{a}) \end{bmatrix}\text{ and } \begin{bmatrix} 0 \\ \psi(\bar{a}) \end{bmatrix} = \begin{bmatrix} a_{t(a)} & b_{t(a)} \\ c_{t(a)} & d_{t(a)}\end{bmatrix}\begin{bmatrix} \psi(e_2) \\ \psi(a) \end{bmatrix}.\]
From the unitarity of $H_{o(a)}$ and $H_{t(a)}$, we have $\psi(a),\psi(\bar{a})\neq 0$. Taking the inverses of $H_{o(a)}$ and $H_{t(a)}$ to both equations, which are the adjoints of them, we also have $\psi(\bar{a})=d_{o(a)}^*\psi(a)$ and $\psi(a)=d_{t(a)}^*\psi(\bar{a})$. 
This implies $|d_{o(a)}d_{t(a)}|=1$. The unitarity of $H_u$'s leads to $|d_{o(a)}|=|d_{t(a)}|=1$, which induces $b_u=c_u=0$ $(u\in \{o(a),t(a)\})$. This is the contradiction to the Assumption~\ref{assump:1}. In the same way, the case for $-1$ in (1) can be proved. 
In the next, let us assume (2) holds. Consider $+1$ case. Then we have 
\[ \begin{bmatrix} 0 \\ \psi(a) \end{bmatrix} = \begin{bmatrix} a_{o(a)} & b_{o(a)} \\ c_{o(a)} & d_{o(a)}\end{bmatrix}\begin{bmatrix} \psi(e_1) \\ \psi(\bar{a}) \end{bmatrix}\text{ and } \begin{bmatrix} \psi(\bar{e}_2) \\ \psi(\bar{a}) \end{bmatrix} = \begin{bmatrix} a_{t(a)} & b_{t(a)} \\ c_{t(a)} & d_{t(a)}\end{bmatrix}\begin{bmatrix} 0 \\ \psi(a) \end{bmatrix}.\]
Taking the inverse, which is the adjoint of $H_{o(a)}$, to the first equation, we have $d_{o(a)}^*\psi(a)=\psi(\bar{a})$, while computing the second equation directly, we have $\psi(\bar{a})=d_{t(a)}\psi(a)$; which implies $\psi(\bar{a})/\psi(a)=d^*_{o(a)}=d_{t(a)}$. 
The $-1$ case also can be done in the same way. 
\end{proof}
\begin{figure}
    \centering
    \includegraphics[keepaspectratio, width=100mm]{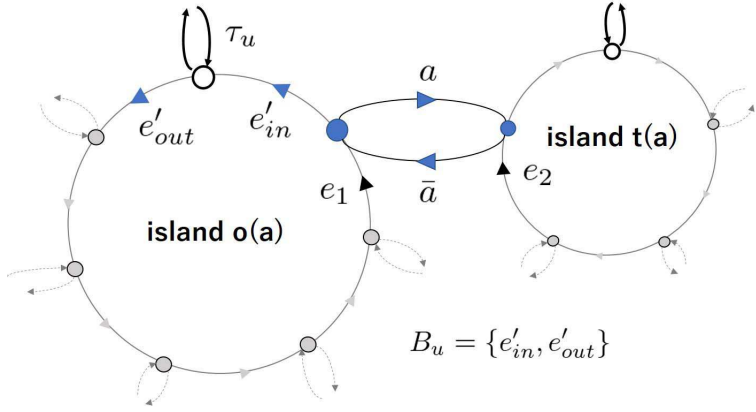}
    \caption{The vertex colored white is the vertex connecting to a tail. The eigenvector of eigenvalue $1$ does not overlap to $B_u$, for any island $u$. }
    \label{fig:Thm6.1}
\end{figure}
\section{The conditions of $\ker(1-U^{BU})=0$ for $K_N$ case}
In this section, let us consider the optical quantum walk induced by the circulant QW with the uniform circulant coin on the complete graph with $N$ vertices $K_N$; $QW(K_N;\bs{H};\xi)$. Here the labeling $\xi=(\xi_i)_{i=0}^{N-1}$ is the same as in (\ref{eq:label}) and the unitary matrices  $\bs{H}=(H_i)_{i=0}^{N-1}$ are denoted by 
    \[ H_i=H=\begin{bmatrix}a& b \\ c & d \end{bmatrix} \]
with $abcd\neq 0$. 
In this setting, we obtain a useful sufficient condition to the parameters $a,b,c,d$ of the circulant QW for the implementation as follows.  
\begin{theorem}
Let the circulant QW on $K_N$ be set as described in the above.  
If ``$d\neq \mathbb{R}$" or ``$\det(H)^{2N}\neq 1$ for $(N>3)$,\;$\det(H)^{2N}\neq -1$ for $(N=3)$", then the induced optical QW implements the circulant QW.
\end{theorem}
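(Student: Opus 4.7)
The plan is to invoke Proposition~\ref{thm:key} and argue by contrapositive: it suffices to show that, under the stated hypotheses on $H$, the only $\ell^2$-eigenvector of $U^{(BU)}$ at eigenvalue~$1$ is zero. Suppose $\psi$ is such an eigenvector. The argument from the proof of Theorem~\ref{cor:graph} applies verbatim: $\psi$ vanishes on all tail arcs and on each $B_u$, so the cycle amplitudes $s_j^{(u)}$ of $\psi$ on the arc entering $(u;j)$ satisfy $s_u^{(u)}=s_{u+1}^{(u)}=0$ for every island $u=0,\dots,N-1$. The remaining unknowns are the $s_j^{(u)}$ with $j\notin\{u,u+1\}$ and the external amplitudes $f_{i,j}$ between distinct islands.

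Because the setting on $K_N$ with uniform coin and labeling $\xi_i((i,j))=j$ is invariant under the cyclic shift $i\mapsto i+1$, I would Fourier-decompose over $\mathbb{Z}_N$ and reduce to a single character. Fix $\omega$ with $\omega^N=1$ and write $f_{u+k,u}=\omega^u g(k)$ and $s_{u+k}^{(u)}=\omega^u h(k)$ with $h(0)=h(1)=0$. The two scalar components of the $H$-scattering at each non-tail vertex become $h(k+1)=a\,h(k)+b\,g(k)$ and $\omega^k g(N-k)=c\,h(k)+d\,g(k)$. Iterating the first relation from $h(1)=0$ and imposing the cycle closure $h(N)=h(0)=0$ yields
\begin{equation*}
\sum_{k=1}^{N-1} a^{N-1-k}\,g(k)=0, \tag{$\ast\ast$}
\end{equation*}
while substituting $h(k)=b\sum_{l=1}^{k-1}a^{k-1-l}g(l)$ into the second produces
\begin{equation*}
\omega^k g(N-k)=d\,g(k)+bc\sum_{l=1}^{k-1}a^{k-1-l}g(l),\quad k=1,\dots,N-1. \tag{$\ast$}
\end{equation*}
Together $(\ast)$ and $(\ast\ast)$ form an overdetermined $N\times(N-1)$ linear system in $(g(1),\dots,g(N-1))$, and a nonzero $\psi$ exists if and only if this system admits a nontrivial solution for some $\omega$ with $\omega^N=1$.

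The central step is to show that this system admits a nontrivial $g$ only when both hypotheses of the theorem fail simultaneously. I would analyze the $(N-1)\times(N-1)$ minors of its coefficient matrix. Using the unitarity identities $d=\bar a\det H$ and $bc=-|c|^2\det H$ to eliminate $b,c$ in favor of $a,d,\det H$, the simultaneous vanishing of these minors should reduce to two independent algebraic conditions: a \emph{phase} condition forcing $\det(H)^{2N}$ to equal the critical root of unity isolated in the theorem ($1$ for $N>3$; $-1$ for $N=3$), and an \emph{alignment} condition equivalent to $a/\bar a=\det(H)^2$, which combined with $d=\bar a\det H$ forces $d\in\mathbb{R}$. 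Contrapositively, if $d\notin\mathbb{R}$ the alignment minor cannot vanish, and if the $\det(H)^{2N}$ hypothesis holds the phase minor cannot vanish; in either case $g\equiv 0$ for every $\omega$, hence $\psi=0$, contradicting our assumption.

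The principal obstacle is this minor analysis, which tracks a twisted convolutional recursion around the $N$-cycle. A useful simplification is to rewrite the vertex scattering as a transfer matrix $M=\tfrac{1}{d}\left(\begin{smallmatrix}\det H & b\\ -c & 1\end{smallmatrix}\right)$ acting on $(s_j,f_{u,j})$: the $N$-fold iteration $M^N$ governs the cycle closure imposed by the tail boundary conditions, so the phase condition on $\det H$ appears as a condition on the eigenvalues of $M^N$, and the reality of $d$ emerges from the compatibility of the unitary structure of $H$ with this closure. The case $N=3$ must be handled separately because of its smaller dimension, which is precisely why the theorem distinguishes $\det(H)^{2N}\ne 1$ for $N>3$ from $\det(H)^{2N}\ne -1$ for $N=3$.
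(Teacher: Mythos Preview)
Your proposal has a genuine gap: the ``central step'' is only a plan. You write that you ``would analyze the $(N-1)\times(N-1)$ minors'' and that the vanishing ``should reduce to'' a phase condition and an alignment condition, but you never carry out either computation. The transfer-matrix reformulation at the end is likewise a heuristic, not an argument. Without these calculations there is no proof that the system $(\ast)$--$(\ast\ast)$ is nonsingular under the stated hypotheses, and the Fourier reduction by itself does not make the determinant obviously tractable for general~$N$.

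More importantly, the Fourier decomposition is a detour. The paper exploits a different symmetry that you do not use: for each pair of islands $\ell,m$, the scatterings at the two endpoints of the connecting arc pair are coupled. Writing $z_m^{(\ell)}$ for the cycle amplitude entering $(\ell;m)$ and eliminating the inter-island amplitudes $x_{\ell,m},x_{m,\ell}$ from the two vertex equations at $(\ell;m)$ and $(m;\ell)$ gives, after using $a=\Delta\bar d$ and the reality of $d$, the single two-term recursion
\[
z_m^{(\ell)}=(-\Delta)\,z_\ell^{(m-1)}.
\]
Iterating this $2N$ times returns to $z_m^{(\ell)}$ and forces $(-\Delta)^{2N}=1$; combined with your boundary condition $z_\ell^{(\ell)}=z_{\ell+1}^{(\ell)}=0$, orbit-counting then gives the exact dimension of $\ker(1-U^{(BU)})$. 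No character decomposition and no determinant are needed.

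You also underuse Theorem~\ref{cor:graph}. You invoke it only to obtain the vanishing on $B_u$, but the labeling $\xi$ on $K_N$ satisfies condition~(2) of that theorem at every internal arc with $d_{o(e)}=d_{t(e)}=d$. Hence if $d\neq d^*$ (i.e.\ $d\notin\mathbb{R}$) the theorem already gives $\ker(1-U^{(BU)})=\{0\}$ directly; there is no need to extract ``$d\in\mathbb{R}$'' from a hypothetical alignment minor. The paper does exactly this: it first disposes of $d\notin\mathbb{R}$ via Theorem~\ref{cor:graph}, and then runs the recursion above assuming $d\in\mathbb{R}$.
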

In the setting of Example~2, the above condition is not satisfied because $\det(H)=-1$ ($a=b=c=-d=1/\sqrt{2}$). Now let us move to the proof. 
\begin{proof}
Let us first prepare the following key lemma for the proof. 
\begin{lemma}\label{lem:Kn}
Let $N\geq 3$. We have 
\[\dim\ker(1-U^{BU}) = 
\begin{cases}
N/2-1 & \text{: $d\in \mathbb{R}$, $(-\Delta)^{2N}=1$, $N$ is even,} \\
(N-3)/2 & \text{: $d\in \mathbb{R}$, $(-\Delta)^{2N}=1$, $(-\Delta)^{N}\neq 1$, $N$ is  odd,} \\
(N-1)/2 & \text{: $d\in \mathbb{R}$,  $(-\Delta)^{N}= 1$, $N$ is odd,} \\
0 & \text{: otherwise.}
\end{cases}\]
\end{lemma}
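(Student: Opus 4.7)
The plan is to exploit the cyclic $\mathbb{Z}_N$ symmetry of the uniform configuration on $K_N$ to decompose $\ker(1-U^{(BU)})$ into $N$ character blocks and then count the non-trivial kernels block by block. First, the argument already carried out in the proof of Theorem~\ref{cor:graph} forces any $\psi\in\ker(1-U^{(BU)})$ to vanish on every tail and on every arc adjacent to the tail-vertex $(u,u)$ on each island, so the eigenvalue equation reduces to a finite linear system on the internal cycle and cross arcs. Writing $f_j^{(u)}:=\psi(\text{cycle arc from }(u,j)\text{ to }(u,j+1))$ and eliminating the cross-arc amplitudes through the local scattering $H$ at both endpoints produces, for each ordered pair $u\neq j$, the coupling
\[
 d\,f_j^{(u)}-\Delta\,f_{j-1}^{(u)}=f_u^{(j)}-a\,f_{u-1}^{(j)},\qquad \Delta:=\det H=ad-bc,
\]
together with the boundary conditions $f_u^{(u)}=f_{u-1}^{(u)}=0$ on each island.

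Next, because the shift $i\mapsto i+1$ is a graph automorphism of the blow-up that preserves both the uniform coin and the labeling $\xi$, the operator $U^{(BU)}$ commutes with the associated representation of $\mathbb{Z}_N$, and I would decompose $\ker(1-U^{(BU)})=\bigoplus_{k=0}^{N-1}V_k$ into its isotypic components indexed by $\omega_k=e^{2\pi ik/N}$. Within the block $V_k$ the equivariance forces $f_j^{(u)}=\omega_k^{u}\,\phi(j-u)$ for a single scalar $\phi:\mathbb{Z}_N\to\mathbb{C}$ satisfying $\phi(0)=\phi(-1)=0$, and the coupling collapses to
\[
 d\,\phi(m)-\Delta\,\phi(m-1)=\omega_k^{m}\bigl[\phi(-m)-a\,\phi(-m-1)\bigr],\qquad m=1,\dots,N-1,
\]
giving $N-1$ equations in the $N-2$ unknowns $\phi(1),\dots,\phi(N-2)$ whose kernel dimension is $\dim V_k$.

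To analyse this system I would pair equation $(m)$ with equation $(N-m)$; together they involve only the four unknowns $\phi(m-1),\phi(m),\phi(N-m-1),\phi(N-m)$ and therefore act as a $2\times 2$ transfer step from the outer pair $(\phi(m-1),\phi(N-m))$ to the inner pair $(\phi(m),\phi(N-m-1))$, with common determinant $\delta=d\Delta-a$. One then checks that generically $\delta\neq 0$ and each paired step is injective, so propagating inward from the boundary $(\phi(0),\phi(N-1))=(0,0)$ forces every outer pair to vanish; non-trivial solutions can therefore only be generated by a degeneration at the centre of the chain. For odd $N$ the centre is the single index $m=(N-1)/2$, which yields the two scalar conditions $d+a\omega_k^{(N-1)/2}=0$ and $\Delta+\omega_k^{(N+1)/2}=0$; raising these to appropriate powers shows that simultaneous solvability across some $k$ requires either $(-\Delta)^{N}=1$, in which case $(N-1)/2$ values of $k$ satisfy the conditions, or only the weaker $(-\Delta)^{2N}=1$ with $(-\Delta)^{N}\neq 1$, in which case $(N-3)/2$ values of $k$ survive; in either case the phase compatibility between the two conditions forces $d\in\mathbb{R}$. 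For even $N$ the index $m=N/2$ is self-paired and produces a single equation in two remaining unknowns, and the analogous count yields $N/2-1$ admissible $k$'s whenever $d\in\mathbb{R}$ and $(-\Delta)^{2N}=1$, and $0$ otherwise.

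The main obstacle will be the careful bookkeeping that (i) the paired transfer step is genuinely injective at every non-centre index so that no spurious solutions survive away from the middle, and (ii) the middle compatibility simultaneously enforces both the root-of-unity constraint on $-\Delta$ and the reality of $d$, rather than just the former; in particular, the parity of the middle index $(N\pm 1)/2$ is what cleanly separates the odd-$N$ sub-cases according to whether $(-\Delta)^{N}=1$ or only $(-\Delta)^{2N}=1$, whereas the even-$N$ singleton must be handled separately because it contributes a single $2\times 1$ scattering constraint rather than a paired $2\times 2$ one.
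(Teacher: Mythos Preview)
Your Fourier decomposition via the $\mathbb{Z}_N$-automorphism $(i,j)\mapsto(i+1,j+1)$ is a genuinely different route from the paper's orbit-counting argument, and the reduction to a single profile $\phi$ with $\phi(0)=\phi(-1)=0$ and the coupling $d\phi(m)-\Delta\phi(m-1)=\omega_k^{m}[\phi(-m)-a\phi(-m-1)]$ is correct. The paper instead works directly with the two-index array and extracts the single relation $z_{m}^{(\ell)}=(-\Delta)z_{\ell}^{(m-1)}$; iterating this yields closed orbits of length $2N$ (or $N$ in a special odd-$N$ sub-case), and the dimension is obtained by counting orbits.

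The gap in your execution is in the counting. You assert that ``generically $\delta\neq 0$ and each paired step is injective'', so that ``non-trivial solutions can only be generated by a degeneration at the centre'', and then you count the number of characters $k$ for which the centre condition is solvable. But $\delta=d\Delta-a=\Delta(d-\bar d)$ by unitarity, so $\delta$ vanishes \emph{identically} (at every index $m$, not just the centre) precisely when $d\in\mathbb{R}$. Thus the reality of $d$ is not a consequence of ``phase compatibility at the centre''; it is exactly the condition that kills the transfer determinant globally. Once $d\in\mathbb{R}$, the paired system at step $m$ has a rank-one left-hand side, and the solvability constraint forces $\phi(m-1)=-\Delta^{-1}\omega_k^{m}\phi(N-m)$, while the solution set gives $\phi(m)=-\Delta\omega_k^{m}\phi(N-m-1)$ plus a particular part. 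Comparing these for consecutive $m$ shows that a non-zero $\phi$ survives only in the \emph{single} block with $\omega_k=\Delta^{2}$ (hence requiring $\Delta^{2N}=1$), and in that one block each inward step contributes a free parameter; the final self-paired step then either adds or kills one more according to whether $(-\Delta)^{N}=1$. So the correct count is not ``number of admissible $k$'s, each contributing one'', but ``one admissible $k$, contributing $N/2-1$, $(N-1)/2$, or $(N-3)/2$''. Your strategy can be salvaged along these lines, but the centre-only degeneration picture and the per-$k$ tally as written do not give the lemma.
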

By Lemma~\ref{lem:Kn}, if ``$d\notin \mathbb{R}$" or ``$\Delta^{2N}\neq 1$", Theorem~\ref{thm:key} implies that the optical QW implements the underlying circulant QW.  
\end{proof}
Now we focus on the proof of Lemma~\ref{lem:Kn}.\\
\noindent{\bf Proof of Lemma~\ref{lem:Kn}}. 
Assume $\ker(1-U^{BU})=0$. The labeling $\xi$ satisfies with the condition (2) in Corollary~\ref{cor:graph}. Then the $(2,2)$-element of $H$; $d$, must be a real number.  
Let $U^{BU}\Psi=\Psi$ and $\supp(\Psi)$ be included in the internal blow-up graph. Each vertex in $\tilde{G}_0^{BU,^\xi}$ is labeled by $(\ell,m)$. Here $\ell$ represents the island and $m$ represents the island heading from the island $\ell$. The arc whose origin is $(\ell,m)$ and terminus is $(\ell,m+1)$ belongs to $A_0^{cycle}$, where $m+1$ is the modulus of $N$, while the arc whose origin and terminus are $(\ell,m)$, $(m,\ell)$, respectively belongs to $\tilde{A}_0$. We define $(\ell,\ell)$ is the vertex connecting to the tail. We also define $((\ell,\ell),(\ell,\ell))$ as the arc from $(\ell,\ell)$ to the tail. 
We set 
    \[ \Psi((\ell,m-1),(\ell,m))=:z_m^{(\ell)}, \text{ and } \Psi((\ell,m),(m,\ell))=:x_{\ell,m}.  \]
Note that since the support of $\Psi$ is included in the internal graph, $x_{\ell,\ell}=0$ for any $\ell=0,\dots,N-1$.
All the indexes of ``$z$" and ``$x$" are the modulus of $N$.
Then we have the following useful lemma.
\begin{lemma}
Put $\Delta:=\det H$. We have 
\begin{align}
\begin{bmatrix}
x_{\ell,m} \\ x_{m,\ell}
\end{bmatrix}
&= \frac{1}{b} \begin{bmatrix}
d & 1 \\ 1 & d
\end{bmatrix} 
\begin{bmatrix}
z_{m+1}^{(\ell)} \\ -\Delta z_{m}^{(\ell)}
\end{bmatrix},\label{eq:1} \\ 
\begin{bmatrix}z_{m+1}^{(\ell)} \\ -\Delta z_m^{(\ell)}\end{bmatrix}  &= \sigma_X   \begin{bmatrix}z_{\ell+1}^{(m)} \\ -\Delta z_\ell^{(m)}\end{bmatrix} \label{eq:z} 
\end{align} 
for any $\ell,m=0,\dots,N-1$. 
\end{lemma}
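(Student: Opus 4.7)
The plan is to evaluate $U^{BU}\Psi=\Psi$ at the vertex $(\ell,m)$, solve the resulting local $2\times 2$ scattering relation for the two cross amplitudes in terms of the two cycle amplitudes, and then combine with the analogous relation read off at the partner vertex $(m,\ell)$. First I would unpack Definition~\ref{def:OQW} at $(\ell,m)$: the cycle inflow and outflow are the arcs $(\ell,m-1)\to(\ell,m)$ and $(\ell,m)\to(\ell,m+1)$ carrying $z_m^{(\ell)}$ and $z_{m+1}^{(\ell)}$, while the cross inflow and outflow are the arcs $(m,\ell)\to(\ell,m)$ and $(\ell,m)\to(m,\ell)$ carrying $x_{m,\ell}$ and $x_{\ell,m}$. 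The eigenvalue-$1$ condition at this vertex becomes
\[
\begin{bmatrix} z_{m+1}^{(\ell)} \\ x_{\ell,m} \end{bmatrix}
= H \begin{bmatrix} z_m^{(\ell)} \\ x_{m,\ell} \end{bmatrix}.
\]

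To establish (\ref{eq:1}) I would solve this vertex equation for $(x_{\ell,m},x_{m,\ell})^{\top}$. The top row yields $x_{m,\ell}=(z_{m+1}^{(\ell)}-a z_m^{(\ell)})/b$, and substituting into the bottom row together with $ad-bc=\Delta$ gives $x_{\ell,m}=(d z_{m+1}^{(\ell)}-\Delta z_m^{(\ell)})/b$. The remaining task is to rewrite the coefficient $a$ appearing in the formula for $x_{m,\ell}$ as $d\Delta$. This comes from unitarity: the identity $H^{-1}=H^{*}$ forces $d/\Delta=\bar a$, and combining this with the reduction $d\in\mathbb{R}$ (established at the opening of the proof of Lemma~\ref{lem:Kn}) and $|\Delta|=1$ yields $a=d\Delta$. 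Substituting consolidates the two scalar formulas into the matrix identity (\ref{eq:1}).

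For (\ref{eq:z}) I would repeat the derivation at the partner vertex $(m,\ell)$, swapping the roles of $\ell$ and $m$, to obtain
\[
\begin{bmatrix} x_{m,\ell} \\ x_{\ell,m} \end{bmatrix}
= \frac{1}{b}\begin{bmatrix} d & 1 \\ 1 & d \end{bmatrix}
\begin{bmatrix} z_{\ell+1}^{(m)} \\ -\Delta z_\ell^{(m)} \end{bmatrix}.
\]
Premultiplying both sides by $\sigma_X$ and equating the result with (\ref{eq:1}) gives
\[
\begin{bmatrix} d & 1 \\ 1 & d \end{bmatrix}
\begin{bmatrix} z_{m+1}^{(\ell)} \\ -\Delta z_m^{(\ell)} \end{bmatrix}
=
\begin{bmatrix} d & 1 \\ 1 & d \end{bmatrix}\sigma_X
\begin{bmatrix} z_{\ell+1}^{(m)} \\ -\Delta z_\ell^{(m)} \end{bmatrix}.
\]
The common factor has determinant $d^{2}-1$, which is nonzero: the unitarity relation $|d|^{2}+|c|^{2}=1$ together with Assumption~\ref{assump:1} (which forces $c\neq 0$) gives $|d|<1$. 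Cancelling the common factor yields precisely (\ref{eq:z}).

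The only delicate step is the derivation $a=d\Delta$, which is where the hypothesis $d\in\mathbb{R}$ (arising via the contrapositive of condition~(2) of Corollary~\ref{cor:graph}) enters; everything else is a bookkeeping of arc orientations and a $2\times 2$ invertibility check, so I do not anticipate any substantial obstacle.
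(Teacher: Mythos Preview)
Your proposal is correct and follows essentially the same approach as the paper: write the eigenequation $U^{BU}\Psi=\Psi$ at the vertices $(\ell,m)$ and $(m,\ell)$, solve each $2\times 2$ system for the cross amplitudes, invoke $a=\Delta d$ from unitarity together with $d\in\mathbb{R}$, and then cancel the invertible factor $\begin{bmatrix} d & 1 \\ 1 & d\end{bmatrix}$ to obtain~(\ref{eq:z}). The only point you leave implicit is the commutation $\sigma_X\begin{bmatrix} d & 1 \\ 1 & d\end{bmatrix}=\begin{bmatrix} d & 1 \\ 1 & d\end{bmatrix}\sigma_X$ needed to pass from $\sigma_X M$ (which is what actually arises after premultiplying) to the $M\sigma_X$ in your displayed equation; this is immediate since $M=dI+\sigma_X$, and the paper states it explicitly.
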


\begin{proof}
By inserting the time evolution of the optical QW in Definition~\ref{def:OQW} into the eigenequation $U^{BU}\Psi=\Psi$, we have
\begin{align}
    \begin{bmatrix}z_{m+1}^{(\ell)}\\x_{\ell,m}\end{bmatrix}=\begin{bmatrix}a & b \\ c & d\end{bmatrix}\begin{bmatrix}z_m^{(\ell)} \\ x_{m,\ell}\end{bmatrix},\;\;
    \begin{bmatrix}z_{\ell+1}^{(m)}\\x_{m,\ell}\end{bmatrix}=\begin{bmatrix}a & b \\ c & d\end{bmatrix}\begin{bmatrix}z_\ell^{(m)} \\ x_{\ell,m}\end{bmatrix}.
\end{align}
Solving $x_{\ell,m}$ and $x_{m,\ell}$ from each equation, we obtain
\begin{align}\label{eq:xz}
    \begin{bmatrix} x_{\ell,m} \\ x_{m,\ell} \end{bmatrix} = \frac{1}{b} \begin{bmatrix} d & -\Delta \\ 1 & -\Delta d \end{bmatrix} \begin{bmatrix} z_{m+1}^{(\ell)} \\ z_{m}^{(\ell)} \end{bmatrix},\;
    \begin{bmatrix} x_{m,\ell} \\ x_{\ell,m} \end{bmatrix} = \frac{1}{b} \begin{bmatrix} d & -\Delta \\ 1 & -\Delta d \end{bmatrix} \begin{bmatrix} z_{\ell+1}^{(m)} \\ z_{\ell}^{(m)} \end{bmatrix}.
\end{align}
Here we used $a=\Delta \bar{d}=\Delta d$ which derives from the unitarity of $H$ and Corollary~\ref{cor:graph}. 
Then putting $\bs{z}_m^{(\ell)}:=[z_{m+1}^{(\ell)},\;z_{m}^{(\ell)}]^\top$, we have 
\begin{align*}
    \frac{1}{b} \begin{bmatrix} d & -\Delta \\ 1 & -\Delta d \end{bmatrix} \bs{z}_{m}^{(\ell)} = \sigma_X \frac{1}{b} \begin{bmatrix} d & -\Delta \\ 1 & -\Delta d \end{bmatrix} \bs{z}_{\ell}^{(m)} 
    & \Leftrightarrow 
     \begin{bmatrix} d & 1 \\ 1 & d \end{bmatrix} \begin{bmatrix} 1 & 0 \\ 0 & -\Delta \end{bmatrix}\bs{z}_{m}^{(\ell)} = \sigma_X \begin{bmatrix} d & 1 \\ 1 & d \end{bmatrix} \begin{bmatrix} 1 & 0 \\ 0 & -\Delta \end{bmatrix} \bs{z}_{m}^{(\ell)} \\
    & \Leftrightarrow 
     \begin{bmatrix} 1 & 0 \\ 0 & -\Delta \end{bmatrix}\bs{z}_{m}^{(\ell)} = \begin{bmatrix} d & 1 \\ 1 & d \end{bmatrix}^{-1} \sigma_X \begin{bmatrix} d & 1 \\ 1 & d \end{bmatrix} \begin{bmatrix} 1 & 0 \\ 0 & -\Delta \end{bmatrix} \bs{z}_{\ell}^{(m)} \\
    & \Leftrightarrow 
    \begin{bmatrix} 1 & 0 \\ 0 & -\Delta \end{bmatrix}\bs{z}_{m}^{(\ell)} = \sigma_X  \begin{bmatrix} 1 & 0 \\ 0 & -\Delta \end{bmatrix} \bs{z}_{\ell}^{(m)}.
\end{align*}
Here we used 
    \[ \begin{bmatrix}1 & d \\ d & 1\end{bmatrix}\sigma_X= \sigma_X\begin{bmatrix}1 & d \\ d & 1\end{bmatrix} \]
in the last equivalence. 
\end{proof}
From this Lemma, we obtain two observations. \\
\noindent{\bf Observation 1.}\\
Inserting $\ell=m$ in (\ref{eq:xz}), we have $z_{\ell+1}^{(\ell)}=z_{\ell}^{(\ell)}=0$ since $x_{\ell,\ell}=0$. This is consistent with Corollary~\ref{cor:graph} in the case of (2). \\
\noindent{\bf Observation 2.}\\
Since $\ell,m$ are arbitrarily chosen from $0,\dots,N-1$, equation (\ref{eq:z}) is equivalent to $z_{m}^{(\ell)}=(-\Delta) z_\ell^{(m-1)}$ for any $\ell,m=0,\dots,N-1$. 
Starting from this, we recursively obtain 
\begin{align}
    z_m^{(\ell)} &= (-\Delta)z_{\ell}^{(m-1)} = (-\Delta)^2 z_{m-1}^{(\ell-1)}  \notag\\
    &= (-\Delta)^3 z_{\ell-1}^{(m-2)} = (-\Delta)^4 z_{m-2}^{(\ell-2)} \notag\\
    &= \cdots  \notag\\
    &= (-\Delta)^{2N-1}z_{\ell-N+1}^{(m-N)} = (-\Delta)^{2N}z_{m-N}^{(\ell-N)} = (-\Delta)^{2N}z_{m}^{(\ell)}. \label{eq:orbit}
\end{align}
Then we have $(-\Delta)^{2N}=1$ or $z_{m}^{(\ell)}=0$. If $(-\Delta)^{2N}=0$, then $z_m^{(\ell)}=0$ for all $\ell,m$ and (\ref{eq:z}) implies $x_{\ell,m}=0$ for all $\ell,m$. This is contradiction. Thus at least, $(\Delta)^{2N}=1$ must be satisfied.

In the set of pairs of superscript and subscript $[i,j]\in\{ [\ell-k,m-k]\;|\;k=0,\dots,N \}$ of ``$z$" in the third column of the above equations, the pair $[\ell,m]$ first appears again as $[m-N,\ell-N]$; on the other hand, in the second column, there might be a $k<N$ such that $[\ell-k,m-k-1]=[m,\ell]$ in the modulus of $N$. But such a $k$ can be identified with $k=\ell-m=(N-1)/2$. Thus if such an appearance of $k$ happens, the size $N$ must be odd and the arc of $z_{m}^{(\ell)}$ is located in the ``front" of the vertex $(\ell,\ell)$. In particular, if $N=3$, since $z_{\ell+1}^{(\ell)}=z_\ell^{(\ell)}$, by the observation 1, every arc in the support of $\Psi$ living in $A_0^{cycle}$ is located in the front of the vertex $(\ell,\ell)$ ($\ell=0,1,2$). Therefore $(-\Delta)^{2k+1}=(-\Delta)^{N}$ must be $1$ for $N=3$. If $N>3$ and $N$ is even, the length of such a cycle is $2N$ from (\ref{eq:orbit}). This means that we obtain the eigenvector which is constructed by the orbit of $2N$-length closed path starting from the arc of $z_m^{(\ell)}$ and returning back to this arc. The number of arcs in $A_0^{(cycle)}$ is $N^2$, but Observation~1 implies the $2N$ arcs are eliminated as the support of $\Psi$. Then we have $(N^2-2N)/(2N)=N/2-1$ linearly independent eigenvectors of eigenvalue $1$; that is, $\dim\ker(1-U^{BU})=N/2-1$ if $N>3$ is even, and $(\Delta)^{2N}=1$. 
On the other hand, if $N>3$ and $\ell-m\neq (N-2)/2$, then the length of the orbit of the closed path represented by (\ref{eq:orbit}) is $2N$. Then if $N>3$ is odd and $(-\Delta)^{2N}=1$ but $(-\Delta)^{N}\neq 1$, we have $(N^2-(2N+N))/(2N)$ linearly independent eigenvectors of eigenvalue $1$; while $N>3$ is odd and $(-\Delta)^N=1$, we have $(N^2-(2N+N))/(2N)+1$ linearly independent eigenvectors of eigenvalue $1$. We have completed the proof of Lemma~\ref{lem:Kn}.\;$\square$ 

\section{Proof of key statement (Proposition~\ref{thm:key})}\label{set:proof}
Let us consider the stationary state of the optical quantum walk $\psi^{BU}$ that satisfies $U^{BU}\psi^{BU}=\psi^{BU}$. 
The arcs of the island $u$ in $\tilde{G}_0^{BU,\xi}$ is denoted by  
\[A^{cycle}_{0,u}:=\{a\in A^{cycle}_0 \;|\; t(a)=u\text{ in $\tilde{G}_0$}\}=\{ f_0,\dots,f_{\deg_{\tilde{G}_0}(u)-1} \}\subset \tilde{A}_0^{(BU,\xi)}.\] 
Here we set $t(f_j)=(u;j)$ in $\tilde{G}_0^{BU,\xi}$ for $j=0,\dots,\deg_{\tilde{G}_0}(u)$ (see Fig.~\ref{fig:Lem8.1}). 
The restriction to the island $u$ is defined by $\eta_u: \mathbb{C}^{\tilde{A}^{(BU,\xi)}_0}\to \mathbb{C}^{A^{cycle}_{0,u} }$ such that $(\eta_u\psi)(a)=\psi(a)$ for any $a\in A^{cycle}_{0,u}$ and $\psi\in \mathbb{C}^{\tilde{A}_0^{(BU,\xi)}}$; the adjoint is 
\[ (\eta_u^*f)(a) = \begin{cases} f(a) & \text{: $a\in A_{0,u}^{cycle}$,}\\ 0 & \text{: otherwise.}\end{cases} \]
Note that $\eta_u\eta_u^{*}$ is the identity operator on $\mathbb{C}^{A^{cycle}_0}$ while $\eta^*_u\eta_u$ is the projection operator. 
Then we have 
\begin{align*}
    U^{BU}\psi^{BU}=\psi^{BU} & \Rightarrow \eta_u U^{BU}\psi^{BU}=\eta_u \psi^{BU} \\
    & \Leftrightarrow \eta_uU^{BU}(\eta_u^*\eta_u+(1-\eta_u^*\eta_u))\psi^{BU}=\eta_u\psi^{BU} \\
    & \Leftrightarrow E_u\varphi_{u,0}+\rho_u = \varphi_{u,0},
\end{align*}
where $E_u:=\eta_u U^{BU}\eta_u^*$, $\varphi_{u,0}:=\eta_u\psi^{BU}$, and $\rho_u:= \eta_u U^{BU}(1-\eta_u^*\eta_u)\psi^{BU}$. 
Let $P_u$ be the cyclic permutation matrix on $\mathbb{C}^{[\kappa]}$ such that $(P_u\phi)(j)=\phi(j+1)$ in the modulus of $\kappa=\deg_{\tilde{G}_0}(u)$. Then it is easy to see that $E_u$ is isomorphic to $a_uP_u$. Since $|a_u|<1$, the inverse matrix $(1-E_u)$ exists. Then we have 
\begin{equation}
    \varphi_{u,0}=(1-E_u)^{-1}\rho_{u}, 
\end{equation}
where if we label the $\kappa:=\deg_{\tilde{G}_0}(u)$ arcs from the outside of the island $u$ whose terminal vertices in $\tilde{G}^{BU,\xi}_0$ are $(u;0),(u;1),\dots,(u;\kappa-1)$ by $e_0,\dots,e_{\kappa-1}$, respectively, (see Fig.~\ref{fig:Lem8.1})  then the inflow penetrating into $A_u^{cycle}$ is represented by  
\[\rho_u=[\;b_u\;\psi^{BU}(e_0),\dots,b_u\;\psi^{BU}(e_{\kappa-1})\;]^\top.\] 
Recall that the arcs in the island $u$ are denoted by  $f_0,\dots,f_{\kappa-1}$ and the arcs from the outside of the island $u$ are denoted by $e_0,\dots,e_{\kappa-1}$.
Then we have the following lemma. 
\begin{lemma}\label{lem:1}
Let $\psi^{BU}$ be a generalized eigenvector of $U^{BU}$ satisfying $U^{BU}\psi^{BU}=\psi^{BU}$. 
Set $\varphi_{u,in}=[\;\psi^{BU}(e_0),\dots,\psi^{BU}(e_{\kappa-1})\;]^\top$, $\varphi_{u,out}=[\;\psi^{BU}(\bar{e}_0),\dots,\psi^{BU}(\bar{e}_{\kappa-1})\;]^\top$, and $\varphi_{u,0}=[\; \psi^{BU}(f_0),\dots,\psi^{BU}(f_{\kappa-1}) \;]^\top$. Then we have 
\begin{align}
\varphi_{u,out}&=\mathrm{Circ}(H_u) \varphi_{u,in}, \label{eq:w1}\\
\varphi_{u,0}&=\frac{1}{c_u}(\mathrm{Circ}(H_u)-d_uI_u)\varphi_{u,in}. \label{eq:naka0}
\end{align} 
\end{lemma}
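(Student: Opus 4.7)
The plan is to unfold the fixed-point equation $U^{BU}\psi^{BU}=\psi^{BU}$ into a pair of scalar scattering relations at each vertex $(u;j)$ of the island $u$, solve the resulting linear recurrence around the cycle, and match the coefficients of the resulting convolution to the entries of $\mathrm{Circ}(H_u)$ defined in (\ref{eq:circulantmat}).

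First, at each vertex $(u;j)$ the two incoming arcs are $f_j$ and $e_j$ and the two outgoing arcs are $f_{j+1}$ and $\bar e_j$, where all cycle indices are taken modulo $\kappa:=\deg_{\tilde G_0}(u)$. Writing $\alpha_j:=\psi^{BU}(f_j)$ and $\beta_j:=\psi^{BU}(e_j)$, the local scattering rule (\ref{eq:OQW}) together with $U^{BU}\psi^{BU}=\psi^{BU}$ immediately yields the two scalar relations
\begin{equation*}
\alpha_{j+1}=a_u\alpha_j+b_u\beta_j,\qquad \psi^{BU}(\bar e_j)=c_u\alpha_j+d_u\beta_j.
\end{equation*}

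Second, I would iterate the first recurrence $\kappa$ times and use the periodicity $\alpha_{j+\kappa}=\alpha_j$ forced by the cycle structure of the island, obtaining
\begin{equation*}
(1-a_u^{\kappa})\alpha_j=b_u\sum_{k=0}^{\kappa-1}a_u^{\kappa-1-k}\beta_{j+k}.
\end{equation*}
Unitarity of $H_u$ together with Assumption~\ref{assump:1} gives $|a_u|^2=1-|b_u|^2<1$, so $1-a_u^{\kappa}$ is invertible and $\alpha_j$ is uniquely expressed as a cyclic convolution of $\varphi_{u,in}$. Substituting this expression into $\psi^{BU}(\bar e_j)=c_u\alpha_j+d_u\beta_j$ and reindexing the sum so that it runs over $\beta_{j-\ell}$ for $\ell=0,\dots,\kappa-1$, the coefficients that appear are exactly the weights $w_0^{(u)}$ and $w_\ell^{(u)}$ introduced just after (\ref{eq:circulantmat}). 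Because $[\mathrm{Circ}(H_u)]_{i,j}=w_{i-j}^{(u)}$, this yields (\ref{eq:w1}); reading the second local equation as $\alpha_j=c_u^{-1}(\psi^{BU}(\bar e_j)-d_u\beta_j)$ componentwise then gives (\ref{eq:naka0}).

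The only delicate point is the index bookkeeping in the previous step: one must verify that the $k=0$ term of the convolution combines with the explicit $d_u\beta_j$ contribution to reproduce the diagonal weight $w_0^{(u)}=d_u+\frac{b_uc_u}{1-a_u^{\kappa}}a_u^{\kappa-1}$, and that the cyclic reversal $\ell\mapsto -\ell \pmod{\kappa}$ induced by the reindex is the orientation that produces the circulant pattern in (\ref{eq:circulantmat}) rather than its transpose. Unitarity of $\mathrm{Circ}(H_u)$, which is the auxiliary claim attached to this lemma in the text, then follows either by a short direct check from the explicit $w_\ell^{(u)}$, or more conceptually by observing that $\varphi_{u,in}\mapsto\varphi_{u,out}$ is the scattering matrix of the unitary closed system obtained by sealing the island into a lossless loop driven by $H_u$.
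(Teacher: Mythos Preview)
Your proof is correct and follows essentially the same route as the paper: both arguments extract the two scalar relations $\alpha_{j+1}=a_u\alpha_j+b_u\beta_j$ and $\psi^{BU}(\bar e_j)=c_u\alpha_j+d_u\beta_j$ from the fixed-point equation, solve the first for $\alpha_j$ using $|a_u|<1$, substitute into the second, and identify the resulting cyclic convolution with $\mathrm{Circ}(H_u)$. The only cosmetic difference is that the paper packages the recurrence solution as $\varphi_{u,0}=(1-a_uP_u)^{-1}b_u\varphi_{u,in}$ and expands the operator inverse as a geometric series in the cyclic shift $P_u$, whereas you iterate the scalar recurrence $\kappa$ times and invoke periodicity directly; these are the same computation written at two levels of abstraction.
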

\begin{proof}
It holds that
\begin{align}
    \varphi_{u,out}(j) &= c_u\varphi_{u,0}(j)+d_u\varphi_{u,in}(j) \label{eq:naka}\\
    &= c_u ((1-E_u)^{-1}\rho_u)(j)+d_u \varphi_{u,in}(j) \notag \\
    &= c_u ((1-E_u)^{-1} b_u\varphi_{u,in})(j)+d_u \varphi_{u,in}(j).\notag
\end{align}
Then we have 
\[ \varphi_{u,out}=(b_uc_u(1-E_u)^{-1}+d_u)\varphi_{u,in}. \]
The inverse of $(1-E_u)$ can be expressed by 
\begin{align*}
    (1-E_u)^{-1} &= 1+a_uP_u+(a_uP_u)^2+\cdots \\
    &= 1+(a_u+a_u^{\kappa+1}+a_u^{2\kappa+1})P_u+(a^2+a^{\kappa+2}+a^{2\kappa+2}+\cdots)P_u^2+\\
    &\quad\cdots+(a^{\kappa-1}+a^{\kappa+\kappa-1}+a^{2\kappa+\kappa-1}+\cdots)P_u^{\kappa-1} \\
    &= 1+\frac{a_u}{1-a_u^{\kappa}} P_u+\frac{a_u^2}{1-a_u^{\kappa}} P_u^2+\cdots+\frac{a_u^{\kappa-1}}{1-a_u^\kappa}P_u^{\kappa-1}. \end{align*}
Then inserting the above expression for $(1-E)^{-1}$ into $b_uc_u(1-E_u)^{-1}$, we obtain $\mathrm{Circ}(H_u)=b_uc_u(1-E_u)^{-1}+d_uI_u$, which completes the proof of (\ref{eq:w1}).  We obtain (\ref{eq:naka0}) by combining (\ref{eq:naka}) with (\ref{eq:w1}). 
\end{proof}
\begin{figure}[h]
    \centering
    \includegraphics[width=60mm]{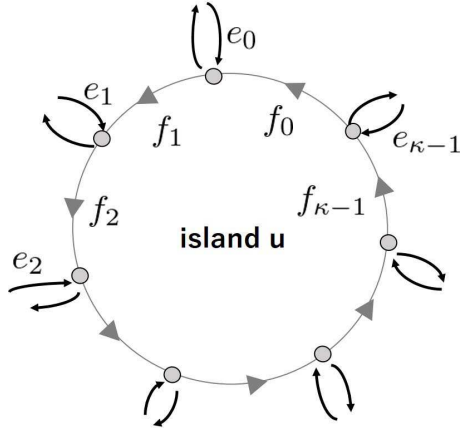}
    \caption{Labeling of the arcs of the island $u$ in the proof of Lemma~\ref{lem:1}. }
    \label{fig:Lem8.1}
\end{figure}

Thus $\mathrm{Circ}(H_u)$ represents the local scattering matrix of the $n$-directed cycle with $n$ boundaries for the in- and out-flow. The unitarity of $\mathrm{Circ}(H_u)$ is ensured by \cite{FelHil1,FelHil2} in more general settings. 

Next, let us introduce the restriction $\iota: \mathbb{C}^{\tilde{A}_0^{(BU,\xi)}}\to \mathbb{C}^{\tilde{A}_0}$ such that $(\iota \psi)(a)=\psi(a)$ for any $a\in \tilde{A}_0$. 
A matrix representation of $\iota$ is given by
\[ \iota \cong [\; I_{\tilde{A}_0} | 0\;] \]
under the decomposition of $\tilde{A}_0^{(BU,\xi)}$ into $\tilde{A}_0\sqcup A^{cycle}_0$. 

Then we have the following proposition.
\begin{proposition}\label{prop:1}
For any $\psi\in \mathbb{C}^{\tilde{A}_0^{(BU,\xi)}}$ satisfying 
a genralized eigenequation $U^{BU}\psi=\psi$, we have 
\[U_0\iota \psi =\iota \psi.\]
\end{proposition}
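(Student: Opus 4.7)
The plan is to verify $(U_0\iota\psi)(a) = \iota\psi(a)$ arc by arc, using the decomposition of $\tilde{A}_0$ into tail arcs and arcs whose origin lies in $V_0$. The core technical ingredient will be Lemma~\ref{lem:1}; everything else is a bookkeeping argument that aligns the optical quantum walk's island-indexing with the labeling $\xi_u$ of $A_u$.

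For tail arcs, the blow-up construction leaves the tails structurally identical, and both models use the $\sigma_X$ coin at tail vertices, yielding the free transmission rule (\ref{eq:tail}). Consequently, $U^{BU}\psi=\psi$ restricted to a tail arc automatically reads $U_0\iota\psi=\iota\psi$ at that arc, with no additional argument needed.

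For an arc $a\in\tilde{A}_0$ with $u=o(a)\in V_0$, I would first write out the local fixed-point equation of $U_0=SC$ at $u$, which says $\iota\psi(a)=\bigl[\mathrm{Circ}(H_u)\,(\iota\psi(\xi_u^{-1}(0)),\dots,\iota\psi(\xi_u^{-1}(\kappa-1)))^{\!\top}\bigr]_{\xi_u(\bar{a})}$, where $\kappa=\deg(u)$. I would then identify the inflow arcs $e_0,\dots,e_{\kappa-1}$ and the outflow arcs $\bar{e}_0,\dots,\bar{e}_{\kappa-1}$ of the island $u$ appearing in Lemma~\ref{lem:1} with, respectively, $\xi_u^{-1}(0),\dots,\xi_u^{-1}(\kappa-1)\in A_u$ and their inverses in $\tilde{A}_0$. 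This identification is built into Definition~\ref{def:BUG}, and since these arcs all lie in $\tilde{A}_0\subset\tilde{A}_0^{(BU,\xi)}$, the restriction $\iota$ preserves their values: $\varphi_{u,in}(j)=\iota\psi(\xi_u^{-1}(j))$ and $\varphi_{u,out}(j)=\iota\psi(\overline{\xi_u^{-1}(j)})$. Relation (\ref{eq:w1}) of Lemma~\ref{lem:1} then rewrites exactly as the fixed-point equation of $U_0$ at $u$, which closes this case.

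The main obstacle is conceptual rather than computational: one must carefully track the correspondence between the inflow/outflow arcs of an island in the blow-up graph and the arcs of $A_u$ in the original graph, as encoded by $\xi_u$. Once this identification is set up, Lemma~\ref{lem:1} does essentially all the work. Notably, the values of $\iota\psi$ on the cycle arcs $A_0^{cycle}$ (which are governed by (\ref{eq:naka0})) play no role here, since these arcs are precisely the ones killed by $\iota$; they only become relevant once one asks the converse question of whether $\psi$ can be reconstructed from $\iota\psi$, which is the issue addressed by Proposition~\ref{thm:key}.
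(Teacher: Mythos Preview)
Your proposal is correct and follows essentially the same approach as the paper: the paper's proof is a single line invoking (\ref{eq:w1}) from Lemma~\ref{lem:1}, and your argument is simply a more explicit unpacking of that invocation, with the tail case separated out and the identification $e_j\leftrightarrow\xi_u^{-1}(j)$ spelled out. There is no substantive difference in method.
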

\begin{proof}
By (\ref{eq:w1}) in Lemma~\ref{lem:1}, 
we immediately obtain the conclusion. 
\end{proof}
We expect that $\iota \psi$ is the stationary state of the circulant quantum walk, but unfortunately, it is not true in general. See Example~2 in Section~\ref{sect:demonstration}. So in the following, let us consider when $\iota \psi$ coincides with the stationary state of the circulant quantum walk. To this end, we prepare the following two propositions. 
\begin{proposition}\label{prop:2}
$\ker(1-U_0)\neq \{\bs{0}\}$ if and only if $\ker(1-U^{BU})\neq \{\bs{0}\}$.
\end{proposition}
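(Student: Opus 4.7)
The plan is to establish a bijective correspondence between $\ker(1-U^{BU})$ and $\ker(1-U_0)$, implemented by the restriction $\iota$ in one direction and by an explicit lift motivated by (\ref{eq:naka0}) in the other. Both implications of the biconditional then follow at once.

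For the direction $\ker(1-U^{BU})\neq\{\bs{0}\} \Rightarrow \ker(1-U_0)\neq\{\bs{0}\}$, I would take a nonzero $\psi \in \ker(1-U^{BU})$. By Proposition~\ref{prop:1}, $\iota\psi$ already lies in $\ker(1-U_0)$, so it only remains to verify $\iota\psi \neq \bs{0}$. Arguing by contradiction, if $\iota\psi = \bs{0}$, then the inflow vector $\varphi_{u,in}$ to every island $u$ vanishes, and equation (\ref{eq:naka0}) of Lemma~\ref{lem:1} forces the cycle component $\varphi_{u,0}$ to vanish as well. Since $\tilde{A}_0^{(BU,\xi)} = \tilde{A}_0 \sqcup A_0^{cycle}$, this gives $\psi \equiv \bs{0}$, contradicting the choice of $\psi$.

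For the converse direction I would construct an explicit lift. Given a nonzero $\phi \in \ker(1-U_0)$, define $\psi \in \mathbb{C}^{\tilde{A}_0^{(BU,\xi)}}$ by $\psi|_{\tilde{A}_0} := \phi$ and, within each island $u \in V_0$, by $\varphi_{u,0} := c_u^{-1}(\mathrm{Circ}(H_u) - d_u I_u)\varphi_{u,in}$ with $\varphi_{u,in}(j) := \phi(\xi_u^{-1}(j))$. Nonvanishing of $\psi$ is immediate since $\iota\psi = \phi \neq \bs{0}$, so the remaining task is to verify $U^{BU}\psi = \psi$. The free tail dynamics is inherited from $\phi$, and at each island vertex $(u;j)$ the outflow identity $\psi(\bar{e}_j) = c_u\psi(f_j) + d_u\psi(e_j)$ is exactly (\ref{eq:naka}) and follows from the definition of $\varphi_{u,0}$ together with $\mathrm{Circ}(H_u)\varphi_{u,in} = \varphi_{u,out}$ (i.e., $U_0\phi = \phi$ at $u$).

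The main obstacle will be the cycle-evolution identity $\psi(f_{j+1}) = a_u\psi(f_j) + b_u\psi(e_j)$, which must close up consistently around the island. Substituting the definition of $\varphi_{u,0}$ and requiring the resulting identity to hold for every inflow $\varphi_{u,in}$ reduces to the purely algebraic relation
\[ w_\ell^{(u)} - a_u\, w_{\ell-1}^{(u)} = d_u\,\delta_{\ell,0} + (b_u c_u - a_u d_u)\,\delta_{\ell,1} \qquad (\ell = 0,\dots,\deg(u)-1;\ \text{indices mod }\deg(u)) \]
on the circulant entries in (\ref{eq:circulantmat}). Conceptually, this is the converse reading of Lemma~\ref{lem:1}: the circulant matrix $\mathrm{Circ}(H_u)$ was defined precisely so that the local scattering around an island assembles into the aggregate input/output relation, so once the relation is written down, its verification (the $\ell=0$ wrap, the $\ell=1$ case producing $-\det H_u$, and cancellation for $\ell\geq 2$) is a short direct computation using the closed form of the $w_\ell^{(u)}$ and Assumption~\ref{assump:1}.
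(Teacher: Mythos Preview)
Your proposal is correct and follows essentially the same route as the paper: use Proposition~\ref{prop:1} plus a ``nonzero restriction'' argument for one direction, and the explicit lift via the formula in~(\ref{eq:naka0}) (the paper's~(\ref{eq:extension})) for the other. The only difference is cosmetic: where the paper shows $\iota\psi\neq\bs{0}$ by invoking $\mathrm{spec}(E_u)\subset\{|z|<1\}$ directly and then verifies the lift by a bare appeal to Lemma~\ref{lem:1}, you unpack the latter into the circulant identity $w_\ell^{(u)}-a_u w_{\ell-1}^{(u)}=d_u\delta_{\ell,0}+(b_uc_u-a_ud_u)\delta_{\ell,1}$, which is indeed the content of the converse of Lemma~\ref{lem:1} and checks out from the closed form of the $w_\ell^{(u)}$.
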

\begin{proof}
Assume $\ker(1-U_0)\neq \{\bs{0}\}$, that is, there exists $\phi \neq \bs{0}$ such that $U_0\phi=\phi$ and $\supp(\phi)\subset A_0$. 
Let $\{e_0,\dots,e_{\kappa-1}\}=\{a\in \tilde{A}_0 \;|\;t(a)=u\}$ and  
$\varphi_{u,in}=[\phi(e_0),\dots,\phi(e_{\kappa-1})]^\top$.
Then let us define $\psi\in \mathbb{C}^{\tilde{A}_0^{(BU,\xi)}}$ by 
    \begin{equation}\label{eq:extension}
    \psi(a)=\begin{cases} \phi(a) & \text{: $a\in \tilde{A}_0$,} \\
    \frac{1}{c_u}((\mathrm{Circ}(H_u)-d_uI_u)\varphi_{u,in})(a) & \text{: $a\in A^{cycle}_{0,u}$, $u\in V_0$.}\end{cases} 
    \end{equation}
Then by Lemma~\ref{lem:1}, we have $U^{BU}\psi=\psi$, and the support of $\psi$ does not have the overlap with the tails, which implies that $||\psi||<\infty$. 
Next, let us consider the converse direction. 
Note that $\mathrm{spec}(E_u)\subset \{z\in\mathbb{C} \;|\; |z|<1\}$. Then the support of the eigenvectors of eigenvalue $1$ must has the overlap to $\tilde{A}_0$.  
Then from Proposition~\ref{prop:1}, we obtain the converse direction. 
\end{proof}
\begin{remark}
Proposition~\ref{prop:2} implies that the eigenvectors of $\ker(1-U_0)$ have one-to-one correspondence to those of $\ker(1-U^{BU})$.  
\end{remark}
\begin{proposition}\label{prop:3}
Assume $\ker(1-U_0)=\{\bs{0}\}$. Then for arbitrary $s_0\in \mathbb{C}^{\partial A_+}$, the following generalized eigenfunction satisfying the boundary condition $\psi|_{\partial A_+}=s_0$ is uniquely determined. 
\begin{align*}
(1-U_0)\psi &= 0,\;
\psi|_{\partial A_+}=s_0.
\end{align*}
\end{proposition}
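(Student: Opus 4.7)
The plan is to reduce the infinite-dimensional eigenproblem on $\tilde{G}_0$ to an affine system on the finite set $A_0$ and to invert it using the hypothesis $\ker(1-U_0)=\{\bs{0}\}$.

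\emph{Step 1 (tail reduction).} By the free tail dynamics~(\ref{eq:tail}), any generalized eigenfunction $\psi$ of $U_0$ is constant along each tail in both the inward and outward directions: $\psi(\bar{a}_{j})=\psi(\bar{a}_{0})$ and $\psi(a_{j})=\psi(a_{0})$ for all $j\ge 0$. The inward constant is prescribed by $s_0$, and the outward constant is a specific component of $\mathrm{Circ}(H_u)\psi|_{A_u}$ at the root vertex $u$ of the tail. Hence $\psi$ is completely determined by $s_0$ together with the restriction $\psi|_{A_0}$.

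\emph{Step 2 (affine system).} Collecting the eigenequations at every internal $u\in V_0$, substituting the prescribed tail inflow entries by $s_0$, and gluing through the flip-flop shift produces a single finite-dimensional equation $(I-T)\,\psi|_{A_0}=R\,s_0$, where $T:=\zeta U_0 \zeta^{*}$ is the operator on $\mathbb{C}^{A_0}$ obtained by zero-padding on the tails, evolving one step by $U_0$, and restricting back, and $R:\mathbb{C}^{\partial A_+}\to \mathbb{C}^{A_0}$ carries $s_0$ through the single scattering $\mathrm{Circ}(H_u)$ at each tail-adjacent vertex.

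\emph{Step 3 (kernel identification and inversion).} It then suffices to show $\ker(I-T)=\{\bs{0}\}$. The key point is the energy identity $\|T\phi\|^{2}+\sum_{u\in V_0}|\text{tail outflow at }u|^{2}=\|\phi\|^{2}$ for every $\phi\in \mathbb{C}^{A_0}$, which follows from the unitarity of each $\mathrm{Circ}(H_u)$ applied to the zero-padded inflow vector together with norm preservation under the shift. A fixed point $T\phi=\phi$ therefore has all tail outflows vanishing, so $\zeta^{*}\phi\in\mathbb{C}^{\tilde{A}_0}$ is supported in $A_0$ and satisfies $U_0\zeta^{*}\phi=\zeta^{*}\phi$; by the hypothesis and the characterization of $\ker(1-U_0)$ used in the proof of Proposition~\ref{prop:2}, we conclude $\phi=\bs{0}$. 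Invertibility of $I-T$ then yields the unique solution $\psi|_{A_0}=(I-T)^{-1}R\,s_0$, and Step~1 reconstructs $\psi$ uniquely on all of $\tilde{A}_0$, giving both existence and uniqueness.

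I expect the main obstacle to be the vertex-by-vertex bookkeeping in Step~3: at each tail-adjacent $u$ one must carefully split the output of the unitary $\mathrm{Circ}(H_u)$ into the internal outflow (which, after the shift, feeds $T\phi$) and the single tail outflow, and then sum the resulting local Pythagorean identities globally. Once that identity is in place, the remainder is routine finite-dimensional linear algebra.
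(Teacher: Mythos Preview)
Your proposal is correct and follows essentially the same route as the paper. Your operator $T=\zeta U_0\zeta^{*}$ is exactly the paper's $E$, and your ``energy identity'' $\|T\phi\|^{2}+\|\text{tail outflow}\|^{2}=\|\phi\|^{2}$ is the paper's norm comparison $\|U_0\zeta^{*}f\|=\|\zeta^{*}\zeta U_0\zeta^{*}f\|$ rewritten in Pythagorean form; the only cosmetic difference is that you assemble it vertex-by-vertex from the unitarity of each $\mathrm{Circ}(H_u)$, whereas the paper invokes the global unitarity of $U_0$ in one line, but the content is identical.
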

\begin{proof}
Let $\zeta: \mathbb{C}^{\tilde{A}_0} \to \mathbb{C}^{\tilde{A}_0\setminus tails}$ such that $(\zeta \phi)(a)=\phi(a)$ for any $a\in \mathbb{C}^{\tilde{A}_0\setminus tails}$. 
Putting $\zeta U_0 \zeta^*=E$, we have 
\[ (1-E)\zeta \psi =  \rho, \]
where $\rho=\zeta U_0 (1-\zeta^*\zeta)\psi$.
Let us see that $(1-E)^{-1}$ exists as follows. 
Assume there is a $f\neq 0$ such that $Ef=f$.
Taking the operation $\zeta^*$ to both sides, we have $||\zeta^*\zeta U_0\zeta^*f||^2=||\zeta^*f||^2$. 
On the other hand, by the unitarity of $U_0$, we have 
$||U_0\zeta^*f||^2=||\zeta^*f||^2$. 
Then we have $|| U_0\zeta^*f ||=||\zeta^*\zeta U_0\zeta^*f||$. Since $\zeta^*\zeta$ is a projection onto the internal graph, this implies that the support of $U_0\zeta^*f$ is included in the internal graph; that is,  $(1-\zeta^*\zeta)U_0\zeta^*f=0$.
This is equivalent to $U_0\zeta^*f-\zeta^*Ef=0$. Since $Ef=f$, we have $(1-U_0)\zeta^*f=0$. 
Thus $\zeta^*f(\neq 0)$ is an $(+1)$-eigenvector of $U_0$, which contradicts the the assumption $\ker(1-U_0)=\{\bs{0}\}$. Therefore we have $\zeta \psi=(1-E)^{-1}\rho$, which implies that $\zeta^*\zeta \psi=\xi (1-E)^{-1}\rho$. Thus the restriction of $\psi$ to the internal graph is uniquely determined. On the other hand, the restriction of $\psi$ to the tails is uniquely determined as $s_0$ for the inflow and $(1-\zeta^*\zeta)U\zeta^*\zeta\psi$ for the outflow.  
\end{proof}
Now we are ready for the proof of Proposition~\ref{thm:key}; ``if $\ker(1-U^{BU})=\{\bs{0}\}$, then $\mathrm{Opt}(QW(G_0;\bs{H};\xi))$ implements $QW(G_0;\bs{H};\xi)$. "\\
\noindent \\
\noindent{\bf Proof of Proposition~\ref{thm:key}.} 
Let $\Psi$ be the stationary state of the optical quantum walk. 
We set the boundary condition by $\Psi|_{\partial A_+}=s_0$. 
The assumption of Proposition~\ref{thm:key} is equivalent to $\ker(1-U_0)=\{\bs{0}\}$ by Proposition~\ref{prop:2}, which is the assumption of Proposition~\ref{prop:3}. 
Then by Proposition~\ref{prop:3}, the unique solution of $(1-U_0)\psi=0$ with $\psi|_{\partial A_+}=s_0$ is nothing but the stationary state of the circulant quantum walk with the inflow represented by $s_0$. 
On the other hand, 
by Proposition~\ref{prop:1}, we have $U_0\iota\Psi=\iota\Psi$. In particular, $\iota \Psi|_{\partial A_+}=\Psi|_{\partial A_+}=s_0$. 
Therefore, the unique solution $\psi$ is described by $\iota \Psi$, which implies that the stationary state for the circulant quantum walk is equivalent to $\psi=\iota \Psi$. \begin{flushright}
$\square$
\end{flushright}
\begin{remark}
In general, the stationary state of quantum walks must be orthogonal to every eigenspace of the time evolution operators in the whole space~\cite{HS}. Let $\psi_\infty^{(BU)}$ be the stationary state of the optical QW. Then 
for any $\phi\in \ker(1-U^{(BU,\xi)})$, we have 
\[ \langle \psi_\infty^{(BU)}, \phi \rangle=0. \]
However, in general, it is not ensured that 
\begin{equation}\label{eq:orthogonal}
\langle \iota\psi_\infty^{(BU)}, \iota\phi \rangle=0. 
\end{equation} 
This is the reason that the optical QW does not implement the underlying QW in Example~2 since the optical QW has eigenvalue $1$ and does not satisfy (\ref{eq:orthogonal}).
\end{remark}

\begin{remark}
Conversely, if $\ker(1-U_0)=\{\bs{0}\}$, then the stationary state $\psi_\infty$ implements the stationary state of $\psi_\infty^{(BU)}$ by the manner of deformation of $\psi_\infty$ given by (\ref{eq:extension}) in  Proposition~\ref{prop:2}.
This means that two stationary states on $\tilde{G}_0$ and $\tilde{G}^{(BU,\xi)}$ implement each other under the condition of $\ker(1-U_0)=\{\bs{0}\}$. 
\end{remark}
\section{Summary and discussion}
In this paper, we considered an optical implementation of quantum walk on a graph driven by a circulant matrix, namely the circulant quantum walk. To this end, we introduced another kind of quantum walk on the blow-up graph of the original graph induced by the circulant quantum walk; this was the optical quantum walk. The blow-up graph is a $2$-regular directed graph. Then making a correspondence between the two incoming edges to each vertex and the vertical and horizontal polarizations, 
we heuristically showed that the optical quantum walk can be implemented by the optical circuit in theory and also proposed the design of the optical circuit for the general graph.  We suggest a kind of search of a perturbed vertex using our circulant quantum walk. A high relative probability of the perturbed vertex is asymptotically stable, while such a probability is asymptotically periodic in the usual quantum search algorithm driven by quantum walks. 
The analysis on this convergence speed has a potential to considering a quantum walk version of the cut off phenamina~\cite{Diaconis,LevinPeres} in the future. 
We also mathematically showed a sufficient condition for the coincidence of the stationary states of the circulant quantum walk and its induced optical quantum walk. From this condition, we gave a useful setting for the circulant quantum walk that can be implemented by the induced optical quantum walk.  

Finally, let us discuss the design of the optical circuit in Section~4, as an experimental approach to potential problem in the future. In Section 4, we designed the optical circuit under ideal conditions where the phases are matched, but experimentally the phases are not matched due to the noise arising from complex environmental fluctuation, so the expected operation does not occur in HWPs. To solve this problem, we propose an experimental  method.
In an optical circuit such as that in subsection 4.1, it is necessary to make the optical path length of one round trip of V-polarized light be an integer multiple of the wavelength in order to obtain constructive interference with the light from previous laps. It is also necessary to match the phases of waves at each PBS. To meet these requirements, we stabilize the optical path length between each pair of PBS's. Experimentally, stabilization of the length of the optical circuit can be achieved by a feedback control by using a reference laser and a piezo-electric transducer which can be attached to a mirror consisting of the optical circuit~\cite{quantumoptics}. If necessary, the phase of the incoming signal is also stabilized by a similar procedure. Also, in subsection 4.2, each pair of island is connected together following the original graph connection; the resulting design is described by $G'$. The ideal design $G'$ is implemented with optical elements as shown in the right figure of Fig. 4. The polarized light that flows out from $G'$ does not return to $G'$, so there is no need to consider the optical path length. On the other hand, H-polarized light flowing out from one island to another interferes with V-polarized light at the PBS of the destination island, so the optical path length needs to be  stabilized by the same feedback control. However, in the above method, we should measure the outflow from each $PBS$ in order to perform feedback control in the optical path between each $PBS$. Then, a trade off problem remains in that the more accurately we try to get the interference inside the island, the more we lose the output outside the island. We expect that such realistic experimental problems based on our proposed optical circuit under the very ideal condition will be improved in the future.

\noindent\\
\noindent {\bf Acknowledgments}
Yu.H. acknowledges financial supports from the Grant-in-Aid of
Scientific Research (C) Japan Society for the Promotion of Science (Grant No.~18K03401). 
E.S. acknowledges financial supports from the Grant-in-Aid of
Scientific Research (C) Japan Society for the Promotion of Science (Grant No.~19K03616) and Research Origin for Dressed Photon.



\begin{small}
\bibliographystyle{jplain}

\begin{thebibliography}{99}
\bibitem{DS}
P. G. Doyle, J. L. Snell,
Random Walks and Electric Networks,
Mathematical Association of America,
1984. 
\bibitem{Diaconis}
P. Diaconis, 
Group Representations in Probability and Statistics, Institute of Mathematical Statistics LECTURE NOTES-MONOGRAPH SERIES vol. 11 (1988)

\bibitem{LevinPeres}
D. A. Levin, Y. Peres,  
Markov Chains and Mixing, 
American Mathematical Society (2017)

\bibitem{AmbainisEtAl}
A. Ambainis, E. Back, A. Nayak, A. Vishwanath, and J. Watrous, 
One-dimensional quantum walks, 
in Proc. 33rd STOC (ACM, New York, NY, 2001), pp. 60–69.

\bibitem{FH}
R. P. Feynman and A. R. Hibbs, 
Quantum Mechanics and Path Integrals, 
Dover Publications, Inc., Mineola, NY, emended edition  (2010).

\bibitem{Ambainis2003} 
A. Ambainis, Quantum walks and their algorithmic applications, 
Int. J. Quantum Inf., {\bf 1} (2003), 507-518. 

\bibitem{Childs}
A. M. Childs, Universal computation by quantum walk, Phys. Rev. Lett. {\bf 102} (2009), 180501.

\bibitem{Portugal}
R. Portugal, 
Quantum Walk and Search Algorithms, 2nd Ed.,
Springer Nature Switzerland, 2018.

\bibitem{Brassard}
G. Brassard, Searching a quantum phone book, Science 275 (1997) 627.

\bibitem{Grover}
L. K. Grover, 
Fixed-point quantum search,
Physical Review Letter 95 (2005)  15050.

\bibitem{YLC}
T. J. Yoder, G. H. Low, I. L. Chuang,
Fixed-point quantum search with an optimal number of queries.
Physical Review Letter 21 (2014) 210501.


\bibitem{FelHil1}
E. Feldman and M. Hillery, Quantum walks on graphs and quantum scattering theory, Coding Theory and Quantum Computing, edited by D. Evans, J. Holt, C. Jones, K. Klintworth, B. Parshall, O. Pfister, and H. Ward, 
Contemporary Mathematics, {\bf 381} (2005), 71-96. 

\bibitem{FelHil2}
E. Feldman and M. Hillery, Modifying quantum walks: A scattering theory approach, Journal of Physics A: Mathematical and Theoretical {\bf 40} (2007), 11319. 

\bibitem{HS}
Yu. Higuchi and E. Segawa,
Dynamical system induced by quantum walks, 
Journal of Physiscs A: Mathematical and Theoretical \textbf{52} (2009), 395202. 

\bibitem{MHS}
S. Mohamed, Yu. Higuchi, E. Segawa, 
Electric circuit induced by quantum walks, 
Journal of Statistical Physics 181 (2020) pp.603–617.

\bibitem{ZhaoEtAl}
Z. Zhao, J. Du, H. Li, T. Yang, Z.-B. Chen, and J.-W. Pan, 
arXiv:quant-ph/0212149 (2002).

\bibitem{SchreiberEtAl}
A. Schreiber, K. N. Cassemiro, V. Poto\u{c}ek, A. G\'abris, P. J. Mosley, E.
Andersson, I. Jex, and Ch. Silberhorn, Phys. Rev. Lett. 104, 050502
(2010).

\bibitem{MW}
K. Manouchehri, J. Wang,
Physical Implementation of Quantum Walks, 
Springer (2014). 

\bibitem{QLMAZOWM}
X. Qiang, T. Loke, A. Montanaro, K. Aungskunsiri
, X. Zhou, J. L. O’Brien
, J. B. Wang, and J.C.F. Matthews, 
Efficient quantum walk on a quantum processor, Nature communications 7 (2016) 11511.


\bibitem{MMOS}
K. Matsue, L. Matsuoka, O. Ogurisu and E. Segawa, 
Resonant-tunneling in discrete-time quantum walk,
uantum Studies: Mathematics and Foundations {\bf 6} (2018), 35–44. 

\bibitem{Morioka}
H. Morioka, 
Generalized eigenfunctions and scattering matrices for position-dependent quantum walks, Rev. Math. Phys., {\bf 31} (2019), 1950019. 

\bibitem{Albe}
S. Albeverio, F. Gesztesy, R. H{\o}egh-Krohn, H. Holden and P. Exner, ``Solvable Model in Quantum Mechanics", AMS Chelsea publishing, 2004.

\bibitem{Davis}
P.J. Davis, Circulant Matrices, second edition, Chelsea, New York, 1994.

\bibitem{quantumoptics}
H-A. Bachor, and T. C. Ralph,
A Guide to Experiments in Quantum Optics, Wiley-VCH (2004).



\bibitem{DoEtAl}
B. Do, M. L. Stohler, S. Balasubramanian, D. S. Elliott, C. Eash, E.
Fischbach, M. A. Fischbach, A. Mills, and B. Zwickl, J. Opt. Soc. Am.
B 22, 499 (2005).

\bibitem{polaoptics}
Simon R and Mukunda N,
Minimal three-component SU(2) gadget for polarization optics Phys. Lett. A, {\bf 143} (1990), 165-169.


\end{thebibliography}

\end{small}

\end{document}